\documentclass[12pt]{IEEEtran}
\usepackage{setspace,amsmath,latexsym,cite,amssymb,epsfig,amsfonts,subfig}
\usepackage{url,cite}
\usepackage{graphicx}
\usepackage{psfrag}
\usepackage{footmisc}
\usepackage{multirow}
\usepackage{color}
\usepackage{soul}
\usepackage{multicol}
\usepackage{mathtools}

\graphicspath{{.}{../eps/}{./images/}{./images/result/}}
\usepackage{amsmath}
\newcommand*\diff{\mathop{}\!\mathrm{d}}

\usepackage{amssymb}
\usepackage{amsthm}

\interdisplaylinepenalty=2500
\usepackage[cmintegrals]{newtxmath}
\usepackage{cite}
\usepackage{siunitx}
\usepackage{mathtools}

\usepackage{mathtools, cuted}
\usepackage[font=small, labelsep=period]{caption}
\usepackage{makecell}
\usepackage{tabu}
\usepackage{bbm}

\pagenumbering{gobble} 

\onecolumn
\doublespace

\newtheorem{lemma}{Lemma}

\DeclareMathOperator{\rank}{rank}

\hyphenation{op-tical net-works semi-conduc-tor}

\begin{document}

\title{Securing Visible Light Communication Systems by Beamforming in the Presence of Randomly Distributed Eavesdroppers}

\author{Sunghwan~Cho,~\IEEEmembership{Student Member,~IEEE,}
        Gaojie~Chen,~\IEEEmembership{Member,~IEEE,}
        and~Justin~P.~Coon,~\IEEEmembership{Senior Member,~IEEE}
\thanks{S. Cho, G. Chen and J. P. Coon are with the Department of Engineering Science, University of Oxford, Oxford OX1 3PJ, U.K. (e-mail: \{sunghwan.cho, gaojie.chen, justin.coon\}@eng.ox.ac.uk).}
\thanks{This work was supported by EPSRC grant number EP/N002350/1 (``Spatially Embedded Networks").}}

\maketitle

\begin{abstract}
This paper considers secrecy enhancement mechanisms in visible light communication (VLC) systems with spatially distributed passive eavesdroppers (EDs) under the assumption that there are multiple LED transmitters and one legitimate receiver (UE). Based on certain amplitude constraints, we propose an optimal beamforming scheme to optimize secrecy performance. Contrary to the case where null-steering is made possible by using knowledge of the ED locations, we show that the optimal solution when only statistical information about ED locations is available directs the transmission along a particular eigenmode related to the intensity of the ED process and the intended channel. Then, a sub-optimal LED selection scheme is provided to reduce the secrecy outage probability (SOP). An approximate closed-form for the SOP is derived by using secrecy capacity bounds. All analysis is numerically verified by Monte Carlo simulations. The analysis shows that the optimal beamformer yields superior performance to LED selection. However, LED selection is still a highly efficient suboptimal scheme due to the complexity associated with the use of multiple transmitters in the full beamforming approach. These performance trends and exact relations between system parameters can be used to develop a secure VLC system in the presence of randomly distributed EDs.
\end{abstract}
\begin{IEEEkeywords}
Physical layer security, visible light communication, beamforming, stochastic geometry, secrecy outage probability.
\end{IEEEkeywords}

\IEEEpeerreviewmaketitle

\section{Introduction}
\label{sec:1}
\IEEEPARstart{D}{ue} to the rapid proliferation of mobile communication devices and the associated difficulties in adequately allocating spectra to support new services, visible light communication (VLC) has become an increasingly interesting topic of research in academia and industry. The VLC medium does not interfere with RF systems, and VLC spectrum can be easily reused (spatially) since light can be confined to a certain indoor area. Moreover, VLC uses unregulated spectrum with a wide bandwidth (428 to 750 THz) and is capable of exploiting existing LED light infrastructure for communication~\cite{lifi,5gwillbe}.

Compared to RF channels, VLC exploits light-of-sight (LoS) propagation and has relatively good signal confinement properties.  However, the VLC channel is still of a broadcast nature. Therefore, securing VLC transmissions is an important issue, particularly for deployments in open places such as public libraries, offices, and shopping malls. To cope with the security issue in RF systems, the focus on physical layer security (PLS), which is based on the information theoretic notion of employing coding to achieve secure communication, has accelerated since Wyner's seminal work \cite{A.D75}. Due to the broadcast nature of RF communications, both the legitimate receiver, or user equipment (UE), and eavesdroppers (EDs) may receive data from the source. However, the principle of PLS states that if the capacity of the intended data transmission channel is higher than that of the eavesdropping channel, the data can be transmitted at a rate close to the difference in their capacities, the so-called \emph{secrecy capacity}, so that only the intended receiver can successfully decode the data. 

It is difficult to obtain knowledge of passive ED locations. Yet, the analysis of secrecy capacity in spatial networks inherently depends upon this geometric properties.  The mathematical theory of stochastic geometry is a powerful  tool for dealing with spatial uncertainty~\cite{M.H08,P.C08}. Using stochastic geometric methods, the impact of random ED locations on secrecy performance for RF communications has been investigated in recent years~\cite{X.Z11,G.G14,T.X14,G.C17}. The location distribution of EDs can be modeled as a Poisson point process (PPP) or a binomial point process (BPP). In \cite{X.Z11}, the locations of multiple legitimate pairs and EDs were represented as independent two-dimensional PPPs, and the average secrecy throughput in such a wireless network was studied. Multiple-input multiple-output (MIMO) transmission with beamforming was considered later in \cite{G.G14,T.X14} to enhance  secrecy performance. Transmit antenna selection and full-duplex schemes have also been used to enhance  secrecy performance with randomly located EDs~\cite{G.C17}.

Motivated by the advantage of PLS, a recent topic of interest in the research community has been the investigation of PLS applied in VLC systems using various transmission methods, e.g., beamforming, jamming, etc.  Recently, Lampe et~al analyzed the achievable secrecy rate for  single-input single-output (SISO) and multiple-input single-output (MISO) scenarios and proposed a variety of beamforming schemes such as zero-forcing (null-steering), artificial noise generation, friendly jamming, and robust beamforming~\cite{lampeJSAC, lampeICC,lampeGlobecom}. Additionally, Alouini et~al proposed the truncated normal input distribution and the truncated generalized normal input distribution to increase the secrecy rate under constraints on the input signal amplitude~\cite{alouini1,alouini2}. It is important to note, however, that these contributions assumed a small number of EDs are present in the system and either the channel state information (CSI) or the locations of the EDs are known. In practice, it might be impossible to obtain ED CSI or locations. 

Inspired by the aforementioned contributions exploiting stochastic geometry in RF communications, our previous work~\cite{17S.C} firstly developed an analogous approach to modeling ED locations in VLC systems. In this paper, we use this model to further analyze system performance and propose new MISO beamforming solutions.  The contributions of this paper can be summarized comprehensively as follows:\begin{itemize}
\item we propose a MISO beamforming solution that optimizes secrecy performance measures (e.g., SNR and secrecy capacity bounds) subject to a signal amplitude constraint for VLC systems when only information about the ED intensity measure is available at the transmitter;
\item we demonstrate that the proposed beamforming method is well approximated by a simple LED selection scheme when the distance between the UE and one of the transmitting LEDs is small;
\item we obtain closed-form bounds on the secrecy outage probability (SOP) when LED selection is adopted.
\end{itemize}

The rest of this paper is organized as follows\footnote{The notation and symbols used in the paper are listed in Table~\ref{tb:1}.}. Section II begins with the system model describing the modulation and beamforming schemes in VLC and providing various performance measures. In Section III, the optimal beamformer maximizing secrecy performance is investigated. In Section IV,  LED selection is proposed, and  closed-form  upper and lower bounds on the SOP are calculated. Section V gives numerical results that support out analysis. Section VI concludes the paper. 

\begin{table}[!t] 
\centering
\caption{Notation and Symbols Used in the Paper}
\small
\begin{tabular}{|c|l|}
\hline

   Symbol             &Definition/Explanation         \\ \hline
$L$                      & the length of a room \\
$W$                    & the width of a room \\
$Z$                     & the height from the ceiling to the work plane\\\
$N$                     & number of transmitters \\
$\Phi_{E}$           & poisson point process of EDs \\
$ \lambda_E$      & ED intensity function  \\
$I_{DC}$              & fixed bias current             \\
$R$                      & photodetector's responsivity \\
$\alpha$               & modulation index            \\
$\phi_{1/2}$            & half illuminance angle   \\
$A_{PD}$                & physical area of a photodiode \\
$\phi$                    & angle of irradiance          \\
$\psi$                    & angle of incidence           \\
$\kappa$                       & refractive index of an optical concentrator \\
$\Psi_{c}$             & received field of view of a photodiode  \\

$\mathbb R$         &   set of real numbers       \\
$\mathbb R^+$         &  set of non-negative real numbers      \\
$\mathbbm{1}$         & all-ones column vector \\
$\mathbf{0}$             & all-zeros column vector \\
$\mathbb E[\cdot]$    &   expectation operator       \\
$\mathbb{P}(\cdot)$   &   probability operator        \\
$[\cdot]^T$              & transpose operator \\
$\Gamma(x,y)$         &   upper incomplete gamma function  \\ \hline
\end{tabular}
\label{tb:1}
\end{table}

\section{System Model}
\label{sec:2}

\begin{figure}[t]
  \centering
  \centerline{\includegraphics[scale=0.76]{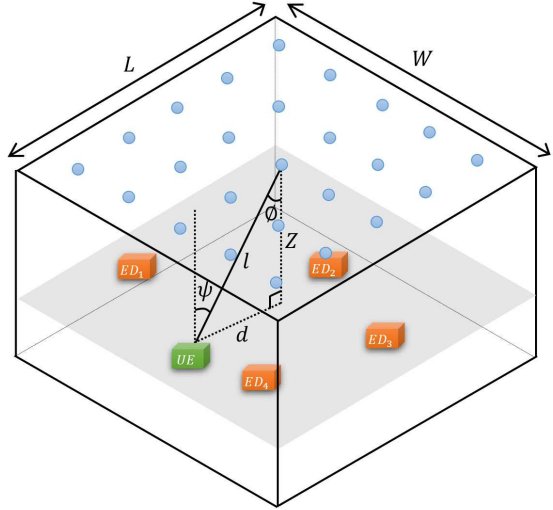}}
  \caption{Rectangular room configuration for VLC systems. $W$ and $L$ are the room's width and length, and $Z$ denotes the height from the ceiling to the work plane. Dots denote LED transmitters.} \label{fig:configuration}
\end{figure}

\subsection{Data Transmission}
We consider the downlink of a VLC system in a rectangular room\footnote{This  may be an open space such as a shopping mall or a large office.} as shown in Fig.~\ref{fig:configuration}, where $W$, $L$, and $Z$ denote the width, the length, and the height of the ceiling relative to the work plane, respectively. We assume that one fixed UE exists and multiple random EDs are randomly distributed according to a PPP $\Phi_{E}$ with  intensity $\lambda_E$  in the room. Note that there is no assumption that the PPP is homogeneous.  We assume all the receiver nodes are located on the same work plane, and $N$ transmitters are attached to the ceiling of the room.  Each transmitter --- i.e., an LED fixture consisting of multiple individual LEDs --- is assumed to be capable of communicating independently of other transmitters~\cite{lampeJSAC}.  We assume that EDs act independently of one another (i.e., there is no collusion).

A DC-biased pulse-amplitude modulation (PAM) VLC scheme is considered~\cite{lampeJSAC, lampeICC}. The data signal $x(t) \in \mathbb{R}$ in time slot $t$ is a zero-mean current signal superimposed on a fixed bias current $I_{DC} \in \mathbb{R}^+$. The fixed bias $I_{DC}$ is used for the  purpose of illumination. To maintain  linear current-to-light conversion, the amplitude of the modulated signal $x(t)$ is constrained such that $ | x(t) |  \leq \alpha I_{DC}$, where $\alpha \in [0,1]$ is termed  the modulation index. Thus, the dynamic range of the LED is $I_{DC} \pm \alpha I_{DC}$. Also, since $\mathbb{E} [x(t) ]=0$, the data signal does not affect  illumination.

The VLC channel model can be written as
\begin{equation}
y(t) = h x(t) + n(t)
\end{equation}
where $h$ is the channel transfer coefficient and $n(t)$ is the zero-mean additive white Gaussian noise (AWGN) at a receiver. According to~\cite{TKomine}, the channel gain $h$ in a VLC system corresponding to an LED with a generalized Lambertian emission pattern is given by
\begin{equation} \label{channelgain}
h=\left\{ \begin{array}{ccc}
 \eta\displaystyle\frac{(m+1) A_{PD}} {2\pi l^2}  \frac{\kappa^2 \cos^m(\phi) }{\sin^2(\Psi_{c}) }\cos(\psi) R T & \mbox{for} & |\psi| \leq  \Psi_{c}, \\
 0  & \mbox{for} & |\psi|>\Psi_{c}
\end{array}\right.
\end{equation}
where $\eta$ ($W/A$) is the current-to-light conversion efficiency and $m=-\ln(2)/\ln(\cos(\phi_{1/2}))$ is the order of Lambertian emission with half illuminance at $\phi_{1/2}$, and $A_{PD}$ is the physical area of the photodiode (PD). As shown in Fig.~\ref{fig:configuration}, $l$ is the distance between the transmitter and the receiver, and $d$ denotes the distance between the transmitter and the receiver in the work plane. $\phi$ is the angle of irradiance, and $\psi$ is the angle of incidence. Also, $\kappa$ is the refractive index of the optical concentrator at the receiver, $\Psi_{c}$ denotes the received field of view of the PD, $R$ is the photodetector's responsivity, and $T$ ($V/A$) is the transimpedance amplifier gain. Note that this channel model considers only an LoS component. Moreover, by assuming that a receiver's PD faces up normal to the work plane, we can rewrite (\ref{channelgain}) in terms of $l$ as
\begin{equation} \label{simplified_h}
\begin{split}
h = & \eta \displaystyle\frac{(m+1) A_{PD}} {2\pi l^2}   \frac{\kappa^2}{\sin^2(\Psi_{c}) } \left(\frac{Z}{l}\right)^m \left(\frac{Z}{l}\right) RT=Kl^{-(m+3)}
\end{split}
\end{equation}
where $K=\left( \eta (m+1)A_{PD} Z^{m+1} \kappa^2 RT \right)/\left(2 \pi \sin^2(\Psi_c)\right)$.

As in~\cite{lampeJSAC}, we define a beamforming vector $\mathbf{w} = [ w_1, w_2, ..., w_N ]^T$, where $w_i$ for $i \in \{1,2,...,N \}$ is a weight for the $i$th transmitter and $|w_i| \leq 1$. Thus, the transmitted signal vector $\mathbf{x}(t) \in \mathbb{R}^N$ can be written as $\mathbf{x} (t) = \mathbf{w} s(t)$, where $s(t)$ is the transmitted data symbol. Accordingly, the transmitted signal $\mathbf{x}(t)$ is subject to the amplitude constraint $|\mathbf{x}(t)| \preceq \alpha I_{DC} \mathbbm{1}$. Therefore, the received signal at the UE and eavesdropper $E_e$ with $e \in \Phi_{E}$ can be described as
\begin{subequations}
\begin{align}
y_U(t) &=  \mathbf{h_U^T} \mathbf{x}(t) + n_U(t), \\
y_{E_{e}}(t) &= \mathbf{h_{E_{e}}^T}\mathbf{x}(t) + n_{E_{e}}(t)
\end{align}
\end{subequations}
respectively, where $\mathbf{h_U}$ and $\mathbf{h_{E_{e}}} \in \mathbb{R}^N$ are the channel gain vectors from the transmitters to the UE and eavesdropper $E_e$, respectively, and $n_U$ and $n_{E_e}$ are zero-mean AWGN random variables at the UE and eavesdropper $E_e$, each with variance $\sigma^2$. For notational convenience, the time index $t$ is ignored for the remainder of the paper.

\subsection{Performance Measures}
For Gaussian VLC channels with  amplitude constraints, the peak signal-to-noise ratio (SNR) at the UE and the eavesdropper $E_e$ can be written as
\begin{subequations}
\begin{align} \label{eq:SNR1}
\gamma_U &= \displaystyle \frac{\alpha^2 I_{DC}^2 \mathbf{w^T}\mathbf{h_U}\mathbf{h_U^T}\mathbf{w}}{\sigma^2}, \\ \label{eq:SNR2}
\gamma_{E_e} &= \displaystyle \frac{\alpha^2 I_{DC}^2 \mathbf{w^T}\mathbf{h_{E_e}}\mathbf{h_{E_e}^T}\mathbf{w}}{\sigma^2}.
\end{align}
\end{subequations}
The capacity of the  VLC channel is given by~\cite{optic_capacity}
\begin{equation} \label{eq:definition_capacity_vlc}
C = \max_{p_X} \mathbb{I}(X;Y) 
\end{equation}
where $p_X$ is the input distribution and $\mathbb{I}(\cdot;\cdot)$ denotes the mutual information. Note that the random variable $X$ has an amplitude constraint, i.e., $| X | \leq \alpha I_{DC}$. It is infeasible to calculate the closed-form solution for (\ref{eq:definition_capacity_vlc}) due to this amplitude constraint~\cite{capacity_closedform}. Thus, the capacity upper and lower bounds are used for our analysis, which are given in~\cite[Theorem 5]{optic_capacity} as
\begin{subequations}
\label{eq:definition_capacity1}
\begin{align} \label{eq:definition_capacity}
&C^{\text{upper}} =\displaystyle \frac{1}{2} \log\left(1+\gamma\right), \\ \label{eq:definition_capacity2}
&C^{\text{lower}}= \displaystyle\frac{1}{2} \log\left(1+\frac{2 \gamma}{\pi e}\right)
\end{align}
\end{subequations}
where $\gamma$ is the received SNR.

In addition to that, we define the SOP as the probability that the secrecy capacity $C_s$ is lower than a threshold secrecy rate $C_{\text{th}}$, i.e.,
\begin{equation}
P_{\text{SO}}= \mathbb{P} ( C_s  \leq C_{\text{th}}).
\end{equation}
However, since the closed-form of the secrecy capacity with the input amplitude constraint is also not readily available, we employ the lower and upper bounds on secrecy capacity as defined in~\cite[Theorem 1]{lampeJSAC}, which are given by
\begin{subequations}
\begin{align}
C_s^{\text{lower}}&=\frac{1}{2} \log\left(\frac{6 \gamma_U + 3 \pi e}{\pi e \gamma_E^* + 3\pi e}\right), \\
C_s^{\text{upper}}&=\frac{1}{2} \log\left(\frac{\gamma_U + 1}{\gamma_E^* + 1}\right)
\end{align}
\end{subequations}
where $\gamma_E^*=\underset{e \in \Phi_E}{\max}~{\gamma_{E_e}}$ (i.e., the \emph{worst case} ED with the highest SNR). Applying these bounds yields the following upper and lower bounds on the SOP:
\begin{subequations}\label{SOP}
\begin{align}
P_{\text{SO}}^{\text{upper}}&= \mathbb{P} ( C_s^{\text{lower}}  \leq C_{\text{th}}), \\ \label{SOP_upper}
P_{\text{SO}}^{\text{lower}}&= \mathbb{P} ( C_s^{\text{upper}}  \leq C_{\text{th}}).
\end{align}
\end{subequations}

\section{Optimal Beamforming}
\label{sec:3}
In this section, we propose beamformer designs based on the formulation of several optimisation problems that aim to improve secrecy performance when only information about the intensity of the ED PPP is known.  Crucially, we demonstrate that the proposed beamforming solutions apply to both homogenous and inhomogeneous ED processes.

\subsection{Optimization Based on SNR}
Without knowledge of ED locations, a natural objective is to minimize the average SNR of EDs $\overline{\gamma}_E$ subject to a constraint on the minimum require UE SNR  $ {\gamma}_U$, same as in RF communications~\cite{09I.K, 15H.H ,16G.C}. A related, alternative objective may be to maximize ${\gamma}_U$ subject to a constraint on $\overline{\gamma}_E$. In this subsection, both of these cases will be investigated.

\subsubsection{Minimizing Average Eavesdropper  SNR} \label{sec:MES}
The SNR of the UE~(\ref{eq:SNR1}) can be written as
\begin{equation} \label{eq:SNR_U}
\gamma_U = \varphi \mathbf{w^T A w}
\end{equation}
where $\varphi=\alpha^2 I_{DC}^2 / \sigma^2$ and $\mathbf{A}=\mathbf{h_U h_U^T}$. Note that the rank of $\mathbf{A}$ is one.
Also, from (\ref{eq:SNR2}), the average SNR of an ED  can be written as
\begin{equation} \label{average_E_SNR}
\overline{\gamma}_E  = \mathbb{E} [\varphi \mathbf{w^T h_{E_e} h_{E_e}^T w} ] =\varphi \mathbf{w^T} \mathbb{E} [\mathbf{h_{E_e} h_{E_e}^T}] \mathbf{w} =\varphi \mathbf{w^T} \overline{\mathbf{B}} \mathbf{w}
\end{equation}
where $\overline{\mathbf{B}}=\mathbb{E} [\mathbf{h_{E_e} h_{E_e}^T}]$. The element in the $i$th row and $j$th column of $\overline{\mathbf{B}}$ is given by
\begin{equation} \label{eq:B}
\overline{B}_{i,j} = \frac{1}{N_E}\int_{\frac{-L}{2}} ^{\frac{L}{2}} \int_{\frac{-W}{2}}^{\frac{W}{2}} \frac{{\lambda_E(x, y)} K^2 }{l_i^{m+3}(x,y) l_j^{m+3}(x,y)}  \diff x \diff y\\
\end{equation}
where $\lambda_E(x,y)$ is the intensity of EDs at the point $(x, y)$ and $l_i(x,y)$ for $i \in \{1,2,\cdots,N\}$ is the distance between the $i$th transmitter and the point $(x, y)$. Note that $\lambda_E (x,y)$ is a constant when the ED point process is homogeneous. Also, $N_E$ denotes the average number of EDs, which is given by
\begin{equation}
N_E = \int_{\frac{-L}{2}} ^{\frac{L}{2}} \int_{\frac{-W}{2}}^{\frac{W}{2}} {\lambda_E(x, y)}  \diff x \diff y.
\end{equation}

From the formulation given above, it is clear that the optimal beamforming vector $\mathbf{w}^*$ is given by
\begin{subequations}\label{case1}
\begin{align}\label{objective1}
&\mathbf{w^*}=\arg\underset{\mathbf{w}}{\min}~ \varphi \mathbf{w^T \overline{B} w} \\ \label{eq:constraint1}
&\text{s.\,t.}\, \left\{ \begin{array}{l}
\varphi \mathbf{w^T A w} \geq \rho_U \\
|\mathbf{w}| \preceq \mathbbm{1}
\end{array} \right.
\end{align}
\end{subequations}
where $\rho_U$ denotes the required SNR of the UE. Note that the optimal beamformer $\mathbf{w}^*$ is $\mathbf{0} \in \mathbb{R}^N$ without the first constraint. Then, we form the Lagrangian~\cite{04S.B} as follows
\begin{equation}\label{eq:Lagrange1}
\mathcal{L} = \varphi \mathbf{w^T \overline{B} w} - \lambda \left(\varphi \mathbf{w^T A w} - \rho_U \right) - \mathbf{\mu_-^T}(\mathbf{w}+\mathbbm{1})+\mathbf{\mu_+^T}(\mathbf{w}-\mathbbm{1})
\end{equation}
where $\lambda\in \mathbb{R}$ and $\mathbf{\mu}_-$, $\mathbf{\mu}_+ \in \mathbb{R}^N$ are the Lagrange multipliers. To let $\mathcal{L}$ have the non-trivial minimum value with respect to $\mathbf{w}$ and analytically calculate the optimal solution of (\ref{case1}), the condition has to be satisfied as 
\begin{equation}
\mathbf{\mu}_-=\mathbf{\mu}_+=\mathbf{0}.
\label{eq:slack}
\end{equation}
According to the local sensitivity analysis in~\cite{04S.B}, zero Lagrange multipliers $\mathbf{\mu}_-$ and $\mathbf{\mu}_+$ imply that the second inequality constraint is slack, i.e., $|w_i| \neq 1$ for $i \in \{1, 2, ..., N\}$. 

If (\ref{eq:slack}) can be satisfied\footnote{If $\rho_U$ can be achieved by only the transmission of the nearest LED to the UE, the beamformers of other transmitters must not be $\pm1$. Moreover, without loss of generality, we can assume that the beamforming element of the nearest LED can be very close to $\pm 1$, but not equals to $\pm 1$. Then, (\ref{eq:slack}) can be satisfied.}, computing the partial derivative of $\mathcal{L}$ with respect to $\mathbf{w}$ and setting the result equal to zero leads to 
\begin{equation} \label{eq:Lagrange_equality1}
(\mathbf{\overline{B}}-\lambda \mathbf{A}) \mathbf{w} = \mathbf{0}.
\end{equation}
The optimal beamforming vector must satisfy (\ref{eq:Lagrange_equality1}). If $\lambda=0$ (i.e., the UE SNR constraint is inactive), $\mathbf{w}$ should belong to the null space of $\mathbf{\overline{B}}$.  Referring to (\ref{eq:B}), it should be clear that the rank of $\mathbf{\overline{B}}$ depends on the intensity of the ED process.  Indeed, it is possible that the intensity is such that $\mathbf{\overline{B}}$ is reduced rank. On the other hand, if $\mathbf{\overline{B}}$ is full rank\footnote{This condition can be confirmed when the ED PPP is homogeneous, for example.}, we have
\begin{equation} \label{eq:Lagrange_equality2}
\overline{\mathbf{B}}^{-1}\mathbf{A} \mathbf{w}= \eta  \mathbf{w}
\end{equation}
where $\eta=1/\lambda$. This implies that $\eta$ is the eigenvalue of $\overline{\mathbf{B}}^{-1}\mathbf{A}$ and $\mathbf{w}$ is the corresponding eigenvector.  Hence, the optimal solution satisfies
\begin{equation} \label{eq:Lagrange}
\overline{\gamma}_E = \varphi \mathbf{w^T\overline{B}w}= \varphi \frac{1}{\eta} \mathbf{w^TAw} \geq  \frac{1}{\eta_{\text{max}}}\rho_U.
\end{equation}
From this, we deduce that the minimum $\overline{\gamma}_E$ is  inversely proportional to the maximum eigenvalue $\eta_{\text{max}}$. Here, since $\mathbf{A}$ is rank one, we have that $\mathbf{\overline{B}^{-1}A}$ is rank one since
\begin{equation}
  0 < \rank\!\left(\mathbf{\overline{B}^{-1}A}\right) \leq \min\!\left\{\rank\!\left(\mathbf{\overline{B}}\right),\rank(\mathbf{A})\right\} = 1.
\end{equation}
Hence, there  exists a single non-zero eigenvalue $\eta_{\text{max}}$, and the optimal beamformer $\mathbf{w^*}$ is obtained by scaling the corresponding eigenvector such that $\varphi \mathbf{{w^*}^TA{w^*}} = \rho_U$ and $|\mathbf{w^*}| \prec \mathbbm{1}$. When feasible, the fact that only one non-zero eigenvalue exists implies the solution is unique.  Hence, it is clear that this approach  gives a simple method of calculating the beamforming vector. 

On the other hand, if $\rho_U$ is high so that (\ref{eq:slack}) cannot be satisfied, the optimal solution $\mathbf{w^*}$ can be calculated numerically, since the objective function (\ref{objective1}) is convex (Note $\overline{\mathbf{B}}$ is a positive semidefinite matrix).

\subsubsection{Maximizing User SNR} \label{sec:MUS}
We now investigate the problem of maximizing the SNR of the UE $\gamma_U$ while constraining  the average SNR of the EDs $\overline{\gamma}_E$.  Similar to the previous subsection, we can formulate the optimization problem as
\begin{subequations} \label{eq:optimization_problem2}
\begin{align} \label{eq:optimization_problem2a}
&\mathbf{w^*}=\arg\underset{\mathbf{w}}{\max}~ \varphi \mathbf{w^T A w} \\ \label{eq:constraint2}
&\text{s.\,t.}\, \left\{ \begin{array}{l}
\varphi \mathbf{w^T \overline{B} w} \leq \overline{\rho}_E \\
|\mathbf{w}| \preceq \mathbbm{1}
\end{array} \right.
\end{align}
\end{subequations}
where $\overline{\rho}_E$ is the target constraint on the average SNR of EDs. 

Following the same approach as above leads to an analogous result when $\mathbf{\overline{B}}$ is nonsingular:
\begin{equation}
\mathbf{\overline{B}^{-1} A w}= \lambda \mathbf{w}.
\end{equation}
Hence, $\lambda$ and $\mathbf{w}$ are an eigenvalue-eigenvector pair for $\mathbf{\overline{B}^{-1} A}$. The optimal solution satisfies
\begin{equation} \label{eq:Lagrange3}
{\gamma}_U = \varphi \mathbf{w^TAw}= \varphi \lambda \mathbf{w^T \overline{B}w} \leq \lambda_{\text{max}} \overline{\rho}_E
\end{equation}
where $\lambda_{\text{max}}$ is the maximum (indeed, the only nonzero) eigenvalue, and the optimal beamformer $\mathbf{w}^*$ is the associated eigenvector after being scaled such that $ \varphi \mathbf{{w^*}^T\overline{B}{w^*}} =  \overline{\rho}_E$ and $|\mathbf{w}| \prec \mathbbm{1}$ (if feasible).

\subsection{Optimization Based on Capacity}
In this subsection, we will investigate the optimal beamformer based on the capacities of the UE and EDs.

\subsubsection{Minimizing Eavesdropper Average Capacity}
Since the closed-form expression for the capacity of a VLC channel is not available, we will use the upper and lower bounds given in~(\ref{eq:definition_capacity1}). Thus, the optimal beamformer $\mathbf{w}^*$ minimizing the average ED capacity upper bound is given by
\begin{subequations}\label{eq:beamformer_C1}
\begin{eqnarray} \label{eq:beamformer_C1a}
&&\mathbf{w^*}= \arg\underset{\mathbf{w}}{\min}~\mathbb{E} \left[ \displaystyle\frac{1}{2} \log\left(1+\gamma_{E_e} \right) \right] \\
&&\text{s.\,t.}\, \left\{ \begin{array}{l}
\frac{1}{2}\log(1+\frac{2 \gamma_U}{\pi e}) \geq \xi_U \\
|\mathbf{w}| \preceq \mathbbm{1}
\end{array} \right.
\end{eqnarray}
\end{subequations}
where $\xi_U$ is the required capacity lower bound for the UE. However, since the objective function (\ref{eq:beamformer_C1a}) does not lend itself to tractable analysis and a practically implementable solution, we revise the objective function using Jensen's inequality, i.e., it becomes $(1/2) \log(1+ \overline{\gamma}_E)$, which allows us to obtain the suboptimal beamformer $\mathbf{w^*}$. In addition, since $\log(\cdot)$ is a monotonic increasing function, (\ref{eq:beamformer_C1}) can be reformulated to
\begin{subequations} \label{case3}
\begin{eqnarray} \label{eq:beamformer_C1_modified}
&&\mathbf{w^*}= \arg\underset{\mathbf{w}}{\min}~\varphi \mathbf{w^T \overline{B} w}  \\
&&\text{s.\,t.}\, \left\{ \begin{array}{l}
\varphi \mathbf{w^T A w} \geq M_1 \\
|\mathbf{w}| \preceq \mathbbm{1}
\end{array} \right.
\end{eqnarray}
\end{subequations}
where $M_1=\left(2^{2\xi_U}-1\right){\pi e}/{2}$. Note that (\ref{case3}) has an identical form to (\ref{case1}). It follows that the suboptimal beamformer $\mathbf{w^*}$ is the eigenvector corresponding to the maximum eigenvalue of $\mathbf{\overline{B}^{-1} A}$ after being scaled such that $\varphi \mathbf{{w^*}^TA{w^*}} = M_1$ and $|\mathbf{w}| \prec \mathbbm{1}$.

\subsubsection{Maximizing User Capacity}
The optimization problem for maximizing the lower bound of the UE's capacity subject to an ED capacity constraint  can be formulated as
\begin{subequations} \label{eq:beamformer_C2}
\begin{eqnarray}
&&\mathbf{w^*}=\arg \underset{\mathbf{w}}{\max} \displaystyle\frac{1}{2} \log\left(1+\frac{2 \gamma_{U}}{\pi e} \right)  \\
&&\text{s.\,t.}\, \left\{ \begin{array}{l}
\frac{1}{2}  \mathbb{E}\left[\log(1+ \gamma_{E_e})\right] \leq \overline{\xi}_E \\ \label{eq:cost_function1}
|\mathbf{w}| \preceq \mathbbm{1}
\end{array} \right.
\end{eqnarray}
\end{subequations}
where $\overline{\xi}_E$ is the target constraint for the average capacity upper bound of EDs. Again, by applying Jensen's inequality to the constraint,  we arrive at the alternative formulation
\begin{subequations} \label{eq:beamformer_C2_modified}
\begin{eqnarray}
&&\mathbf{w^*}=\arg \underset{\mathbf{w}}{\max}~\mathbf{\varphi w^TAw}  \\
&&\text{s.\,t.}\, \left\{ \begin{array}{l}
\varphi \mathbf{w^T \overline{B} w} \leq M_2 \\ \label{eq:cost_function2}
|\mathbf{w}| \preceq \mathbbm{1}
\end{array} \right.
\end{eqnarray}
\end{subequations}
where $M_2=2^{2{\overline{\xi}}_E}-1$. Hence, we deduce that the suboptimal beamforming vector is the eigenvector corresponding to the maximum eigenvalue of $\mathbf{\overline{B}^{-1} A}$ scaled appropriately.

\begin{figure}
\centering\subfloat[The average SNR of EDs for a UE at $(0,1)$]{ \label{fig:Fig2a}\includegraphics[width=0.43\textwidth]{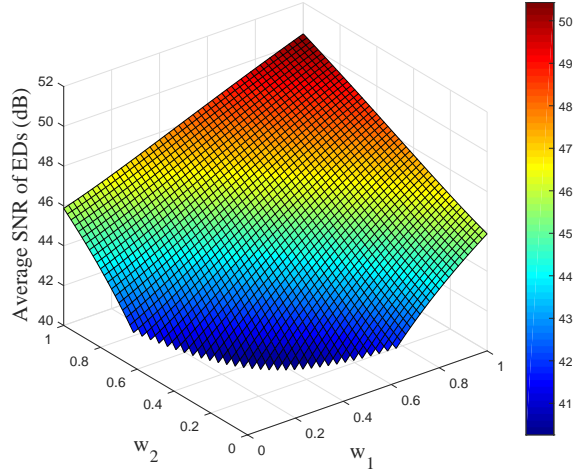}}    \\       
\centering\subfloat[The average SNR of EDs for a UE at $(2,1)$]{ \label{fig:Fig2b}\includegraphics[width=0.43\textwidth]{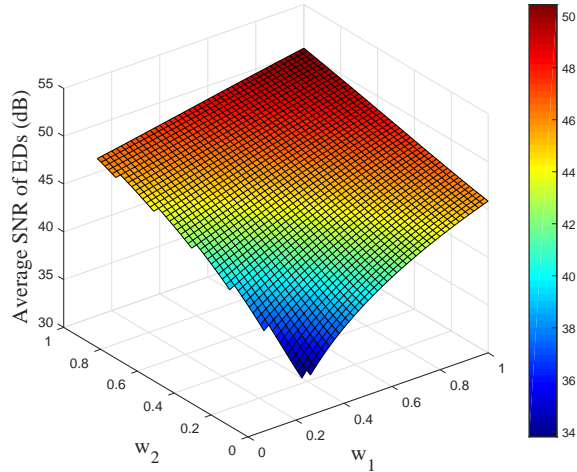}}            
 \caption{The average SNR of EDs for  different UE locations plotted against beamforming weights. Two transmitters $T_1$ and $T_2$ are located at $(2.5, 0)$ and $(-2.5,0)$, respectively. The room size is $L=8~\mathrm{m}$ and $W=8~\mathrm{m}$. ${\rho}_U=40~\mathrm{dB}$ is applied.}  \label{fig:Fig2}
\end{figure}

\subsection{Comparing MISO Beamforming to LED Selection}
In the previous subsection, we showed that the optimal beamformer for SNR and rate objectives is universally related to the maximum eigenmode of $\mathbf{\overline{B}^{-1}A}$.  The proposed optimal beamforming vector cannot be a null-steering solution as was the case in~\cite{lampeICC} unless a (perhaps pathological) condition occurs that makes $\mathbf{\overline{B}}$ singular.  

Since $\mathbf{A}$ depends on the UE's location and $\mathbf{\overline{B}}$ depends on the transmitter locations and the intensity function of the EDs, we can note that when the UE is located near to a transmitter, the optimal beamformer looks like a transmitter selection process, i.e., the nearest transmitter's weight is dominant to the others. This is because the eigenvector corresponding to the maximum eigenvalue is significantly affected by the maximum diagonal element of $\mathbf{A}$ when the UE is near to the transmitter.

To illustrate this point, let us take an example of a simple VLC network where two transmitters are located at $T_1=(2.5,0)$ and $T_2=(-2.5,0)$ in an $8 \times 8~\mathrm{m}^2$ square room. The  center of the room is located at the origin of our coordinate system. Figs.~\ref{fig:Fig2}(a) and (b) show the average SNR of EDs according to a different set of beamformer weights when the UE is located at $(0,1)$ and $(2,1)$, respectively. The required SNR of the UE is $\rho_U=40~\mathrm{dB}$. In Fig.~\ref{fig:Fig2}(a), since the UE is located at the exact middle point of the two transmitters, the weight values for the two transmitters that minimize the average SNR of the EDs are equivalent, i.e., $\mathbf{w^*} \approx (0.3, 0.3)$. However, when the UE is nearer to $T_1$ as in Fig.~\ref{fig:Fig2}(b), we can see that the optimal beamformer resembles  LED selection.  More specifically, $\mathbf{w^*} \approx (0.24, 0.02)$ in this example.  We can thus surmise that secrecy performance will be similar for  optimal beamforming and  transmitter selection when the UE is  located close to a transmitter\footnote{We will further explore the secrecy performance for both schemes later in Section V.}.  

With regard to practicalities of implementation, the complexity of the beamforming scheme can be reasonably high due to the use of multiple transmitters. Even though we can efficiently find an eigenvector related to the maximum eigenvalue by using the power method or the Rayleigh quotient method, the computational complexity still can be significant since complexity grows with $N^2$. Additionally, it might not be practical to accurately estimate the intensity function that describes  ED locations.  These arguments  motivate further investigation of the performance of LED selection in the context of VLC systems with randomly distributed EDs.

\section{LED Selection}
\label{sec:4}

\begin{figure}[t]
\centering
\centerline{\includegraphics[scale=0.45]{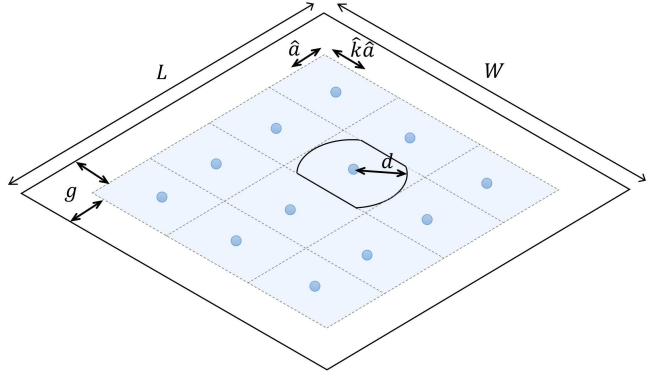}}
\caption{The room configuration for LED selection.}
\label{fig:SOP_configuration}
\end{figure}

We turn our attention to the simple, but suboptimal method of LED selection. In the LED selection scheme, the nearest transmitter to the UE is  selected to transmit the information bearing signal. In this section, we first investigate the SNR and capacity performance metrics with  LED selection. We then analyze the SOP. Closed-form expressions for the upper and lower bounds of the SOP with the LED selection are derived. 

The room configuration in Fig.~\ref{fig:SOP_configuration} is used for our analysis, where multiple transmitters are attached to the ceiling such that the coverage area is identical (but translated in the work plane) for each of them. The parameters $\hat{a}$ and $\hat{k}$ in the figure denote the half length of the rectangular coverage area's width and the ratio of length to width of the coverage area.  We assume that the UE can be located only within  the coverage area\footnote{One may think of this restriction as a policy instigated to guarantee the security of UE, through the restriction of the  UE to a ``safety zone''.  On the other hand, it is fairly easy to see that this model reflects many realistic scenarios.}, i.e., the shaded area, while multiple EDs can (randomly) position themselves anywhere in the entire area of the room.  By defining the number of rows and columns of the LED arrangement as $N_r$ and $N_c$, respectively, the relations $\hat{a}=(L/2-g)/N_r$ and $\hat{k}=(W/2-g)/(N_c \hat{a})$ can be deduced, where $g$ denotes the thickness of the edge zone.  The UE and the EDs are assumed to be uniformly distributed according to homogeneous BPP (with one point) and a homogeneous PPP $\Phi_E$ with intensity  $\lambda_E$, respectively.

\subsection{SNR and Capacity Analysis}
Since only the nearest transmitter is selected to transmit a signal to the UE, the vector representing LED selection $\mathbf{w}_s$ can be described by
\begin{equation}
\mathbf{w}_s = \omega\,\mathbf{e}_i
\end{equation}
where $\omega$ is the weight of the selected transmitter and $\mathbf{e}_i$ is the $i$th column of the identity matrix corresponding to the maximum diagonal element of $\mathbf{A}$, i.e., the $i$th transmitter is the nearest. Therefore, the average SNR of the EDs when  LED selection is employed is given by
\begin{equation} \label{eq:SNR_E_selection}
\overline{\gamma}_E = \varphi \mathbf{w_s^T\overline{B}w_s}.
\end{equation}
If there exists a required SNR for the UE ($\rho_U$, as in (\ref{case1})), $\omega$ should satisfy $\omega^2 {A}_{i,i}^2 \geq \rho_U$, where $A_{i,i}$ denotes the maximum diagonal element of $\mathbf{A}$. Similarly, the SNR of the UE can be described by
\begin{equation} \label{eq:SNR_U_selection}
\gamma_U = \varphi \mathbf{w_s^TAw_s}.
\end{equation}
Also, if there exists a constraint on the average SNR of the EDs ($\overline{\rho}_E$, as in (\ref{eq:optimization_problem2})), $\omega$ should satisfy $\omega^2 \overline{B}_{i,i}^2 \leq \overline{\rho}_E$, where  $\overline{B}_{i,i}$ is the $i$th diagonal element of $\mathbf{\overline{B}}$.

Moreover, the average upper bound on the capacity of the ED channel with  LED selection can be written as
\begin{equation}
\overline{C}_{E}^{\text{upper}}= \mathbb{E} \left[ \displaystyle\frac{1}{2} \log\left(1+\varphi \omega^2 h_{i,E_e}^2 \right) \right]
\end{equation}
where $h_{i,E_e}$ is the channel gain from the $i$th (i.e., the optimal) transmitter to eavesdropper $E_e$.
Also, the lower bound on the capacity of the UE is given by
\begin{equation}
{C}_{U}^{\text{lower}}= \displaystyle\frac{1}{2} \log\left(1+\frac{2\varphi \mathbf{w_s^TAw_s}}{\pi e} \right).
\end{equation}

\subsection{Secrecy Outage Probability}
Here, we calculate upper and lower bounds on the SOP for the LED selection scheme.  To this end, we necessarily must know something about the SNR statistics for the UE and the \emph{nearest} ED, hence, the \emph{worst case} ED.  Thus, we begin by providing results for the  probability density function (PDF) and the cumulative distribution function (CDF) for these random variables. To simplify the calculation of the SOP, we assume the weight of the selected transmitter's beamforming vector is always one.  This assumption is justified since, as we will see, the beamforming weight itself does not significantly affect the SOP when LED selection is adopted.

\begin{lemma}
The CDF and PDF of the received UE SNR  $\gamma_U$ are given by (\ref{eq:CDF_SNR_U}) and (\ref{eq:PDF_SNR_U}) at the top of this page, respectively, where $\zeta=(\alpha^2 I_{DC}^2 K^2)/\sigma^2$, and $y_i$ and $K_i$ for $i \in \{1,2,3,4\}$ are given by
\begin{figure*}[t]
\begin{equation} \begin{split}\label{eq:CDF_SNR_U}
F_{\gamma_U}(y) = \left \{ \begin{array}{lll}
  \displaystyle 1-\left({K_4\left(\left( y/\zeta \right)^{\frac{-1}{m+3}}-Z^2\right)+K_3\left(\left( y/\zeta \right)^{\frac{-1}{m+3}}-Z^2\right)^{3/2}}\right)/({4 \hat{k}{\hat{a}}^2})     &\mbox{for}& y_1 < y \leq y_2 \\ \\
  \displaystyle 1-\left({K_2\left(\left( y/\zeta \right)^{\frac{-1}{m+3}}-Z^2\right)+K_1\left(\left( y/\zeta \right)^{\frac{-1}{m+3}}-Z^2\right)^{3/2}}\right)/({4 \hat{k}\hat{a}^2})  &\mbox{for} &y_2 < y \leq y_3  \\ \\
  \displaystyle 1-{\pi\left(\left( y/\zeta \right)^{\frac{-1}{m+3}}-Z^2\right) }/({4 \hat{k}\hat{a}^2})  &\mbox{for} &y_3 < y \leq y_4
\end{array} \right.  \\
\end{split}
\end{equation}
\hrulefill
\end{figure*}
\begin{figure*}[t]
\begin{equation} \begin{split}\label{eq:PDF_SNR_U}
f_{\gamma_U}(y)= \left \{ \begin{array}{lll}
{\left(y/\zeta\right)^{-\frac{1}{m+3}} \left(3  {K_3}\sqrt{\left(y/\zeta\right)^{-\frac{1}{m+3}}-Z^2}+2  {K_4}\right)} /({8 {\hat{a}}^2  {\hat{k}} (m+3) y})  &\mbox{for}& y_1 < y \leq y_2 \\ \\
{\left(y/\zeta\right)^{-\frac{1}{m+3}} \left(3  {K_1}\sqrt{\left(y/\zeta\right)^{-\frac{1}{m+3}}-Z^2}+2  {K_2}\right)} /({8 {\hat{a}}^2  {\hat{k}} (m+3) y}) &\mbox{for} &y_2 < y \leq y_3  \\ \\
 {\pi  \left(y/\zeta\right)^{-\frac{1}{m+3}}} /({4  {\hat{a}}^2 {\hat{k}} (m+3) y}) &\mbox{for} &y_3 < y \leq y_4
\end{array} \right.
\end{split}
\end{equation}
\hrulefill
\end{figure*}
\begin{equation}\begin{split}
y_1&=\zeta(\hat{a}^2(1+\hat{k}^2)+Z^2)^{-3-m},  \\
y_2&=\zeta(\hat{a}^2\hat{k}^2+Z^2)^{-3-m},  \\
y_3&=\zeta(\hat{a}^2+Z^2)^{-3-m},  \\
y_4&=\zeta Z^{-2(3+m)}
\end{split}
\end{equation}
and
\begin{equation}
\begin{split}\label{eq:K}
K_1&=\frac{\left( {2 \sqrt{\hat{k}^2 -1}}/{\hat{k}^2} -2\arccos{\left(\frac{1}{\hat{k}}\right)  }\right)}{a(\hat{k}-1)},  \\
K_2&=\pi - \hat{a} K_1,\\
K_3&=2 \left(\arccos{\left(\frac{1}{\hat{k}}\right)} -\arccos{\left(\frac{1}{\sqrt{\hat{k}^2+1}}\right)}-\arccos{\left(\frac{\hat{k}}{\sqrt{\hat{k}^2+1}}\right)} \right.  \\
     &~ \left. +\frac{2\hat{k}}{\hat{k}^2+1}-\frac{\sqrt{\hat{k}^2-1}}{\hat{k}^2}\right)/{\left(\hat{a}(\sqrt{\hat{k}^2+1} -\hat{k}) \right)},  \\
K_4&=\pi-2 \left( \arccos{\left(\frac{1}{\hat{k}} \right)} - \frac{\sqrt{\hat{k}^2-1}}{\hat{k}^2}  \right) - \hat{k} \hat{a} K_3.
\end{split}
\end{equation}
\end{lemma}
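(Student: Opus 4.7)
The plan is to express the SNR as a strictly monotone function of the work-plane distance $d$ from the UE to the selected LED and then translate the SNR distribution into a disk--rectangle intersection area in the plane. Under LED selection, with $l^2=d^2+Z^2$, equations (\ref{simplified_h}) and (\ref{eq:SNR_U_selection}) give $\gamma_U=\zeta(d^2+Z^2)^{-(m+3)}$, which is strictly decreasing in $d$, so $F_{\gamma_U}(y)=\mathbb{P}(d\geq r(y))$ with $r(y)=\sqrt{(y/\zeta)^{-1/(m+3)}-Z^2}$. Since the UE is uniform on the $2\hat{a}\times 2\hat{k}\hat{a}$ coverage rectangle centered on the chosen LED, $\mathbb{P}(d<r)=A(r)/(4\hat{k}\hat{a}^2)$, where $A(r)$ denotes the area of the disk of radius $r$ (centered at the LED) intersected with that rectangle. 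The four breakpoints $y_1,\ldots,y_4$ are simply the SNR values at the geometrically critical radii $d=\hat{a}\sqrt{1+\hat{k}^2}$ (corner), $\hat{k}\hat{a}$ (midpoint of a long side), $\hat{a}$ (midpoint of a short side), and $0$ (directly under the LED).

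Next I would compute $A(r)$ on the three regimes separated by those radii. For $r\leq\hat{a}$ the disk sits inside the rectangle, so $A(r)=\pi r^2$, which immediately yields the innermost branch of (\ref{eq:CDF_SNR_U}). For $\hat{a}<r\leq\hat{k}\hat{a}$ the disk bulges only through the two sides at distance $\hat{a}$ and the standard circular-segment formula gives $A(r)=\pi r^2-2[r^2\arccos(\hat{a}/r)-\hat{a}\sqrt{r^2-\hat{a}^2}]$. For $\hat{k}\hat{a}<r\leq\hat{a}\sqrt{1+\hat{k}^2}$ the disk crosses all four sides, but the four outside-rectangle segments are pairwise disjoint in this range (any common point of the segments beyond $x>\hat{a}$ and $y>\hat{k}\hat{a}$ would force $r>\hat{a}\sqrt{1+\hat{k}^2}$), hence $A(r)=\pi r^2-2\sum_{s\in\{\hat{a},\hat{k}\hat{a}\}}[r^2\arccos(s/r)-s\sqrt{r^2-s^2}]$.

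Since these transcendental expressions do not themselves reduce to polynomials in $r$, the next step is to replace them on each nontrivial regime by the cubic $K_{2j}r^2+K_{2j-1}r^3$ whose two coefficients are fixed by matching the exact $A$ at the two endpoints of the regime. In the middle regime, matching at $r=\hat{a}$ (where $A=\pi\hat{a}^2$) gives $K_2=\pi-\hat{a}K_1$, and matching at $r=\hat{k}\hat{a}$ then solves for the stated $K_1$. In the outer regime, matching at $r=\hat{k}\hat{a}$ produces the stated $K_4$, and matching at $r=\hat{a}\sqrt{\hat{k}^2+1}$, where the disk contains the whole rectangle so $A=4\hat{k}\hat{a}^2$, pins down $K_3$; applying the identity $\arccos(1/\sqrt{\hat{k}^2+1})+\arccos(\hat{k}/\sqrt{\hat{k}^2+1})=\pi/2$ then rearranges that solution into the compact arccosine combination in (\ref{eq:K}). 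Substituting $r^2=(y/\zeta)^{-1/(m+3)}-Z^2$ reproduces (\ref{eq:CDF_SNR_U}), and differentiating $F_{\gamma_U}$ in $y$ through $\diff(r^2)/\diff y=-(y/\zeta)^{-1/(m+3)}/[(m+3)y]$ together with $\diff(r^3)/\diff y=(3r/2)\,\diff(r^2)/\diff y$ assembles the factor $3K_{2j-1}\sqrt{r^2}+2K_{2j}$ of (\ref{eq:PDF_SNR_U}).

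The main obstacles I anticipate are (i) confirming the four-segment disjointness in the outer regime so that the exact $A(r)$ remains a clean difference of circular-segment areas, (ii) the arccosine manipulation that collapses $K_3$ into its stated form using the $\pi/2$ identity, and (iii) being explicit that the two nontrivial pieces of (\ref{eq:CDF_SNR_U}) are a two-endpoint polynomial surrogate for the transcendental exact $A(r)$: exact at the regime boundaries but approximate in between. The ordering $\hat{k}\geq 1$ is used tacitly throughout to keep the three regimes in the stated sequence.
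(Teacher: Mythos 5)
Your proposal is correct and follows essentially the same route as the paper's Appendix~A: invert the monotone map $\gamma_U=\zeta(d^2+Z^2)^{-(m+3)}$, express the CDF through the disk--rectangle intersection area $A(d)$, and replace the transcendental segment formulas by an endpoint-matched surrogate (your cubic $K_{2j}r^2+K_{2j-1}r^3$ is exactly the paper's piecewise-linear interpolation of $D(d)=A(d)/d^2$ at $d\in\{\hat{a},\hat{k}\hat{a},\hat{a}\sqrt{\hat{k}^2+1}\}$). Your endpoint values and the $\arccos(1/\sqrt{\hat{k}^2+1})+\arccos(\hat{k}/\sqrt{\hat{k}^2+1})=\pi/2$ reduction reproduce the stated $K_i$, and you are in fact more explicit than the paper about the segment-disjointness check and about the approximation being exact only at the regime boundaries.
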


\begin{proof}
See Appendix \ref{sec:app_A}.
\end{proof}

\begin{lemma}
The CDF and PDF of the received SNR for the nearest ED relative to the selected transmitter $\gamma_E^*$  are given by
\begin{subequations}
\begin{align}
F_{\gamma_E^*}(x)&= e^{\lambda_E \pi \left( Z^2 -\left(\frac{x}{\zeta} \right)^{-\frac{1}{m+3}}\right)}, \\
f_{\gamma_E^*}(x)&=\frac{\lambda_E \pi \left(\frac{x}{\zeta} \right)^{-\frac{1}{m+3}}}{x(m+3)} e^{\lambda_E \pi \left( \left(\frac{x}{\zeta} \right)^{-\frac{1}{m+3}}-Z^2\right)}
\end{align}\end{subequations}
for $0 \leq x \leq \zeta Z^{-2(m+3)}$.
\end{lemma}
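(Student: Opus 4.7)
The plan is to express the worst-case eavesdropper SNR $\gamma_E^*$ in terms of the minimum planar distance from the selected transmitter to any ED, and then invoke the void probability of a homogeneous PPP.

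First I would observe that under LED selection with unit weight, the SNR at eavesdropper $e$ reduces to $\gamma_{E_e}=\varphi h_{i,E_e}^{2}$, where $i$ indexes the selected (nearest) transmitter. Substituting the channel model (\ref{simplified_h}) and writing $l_{i,e}^{2}=d_{i,e}^{2}+Z^{2}$, where $d_{i,e}$ is the planar distance from transmitter $i$ to ED $e$, gives $\gamma_{E_e}=\zeta(d_{i,e}^{2}+Z^{2})^{-(m+3)}$. This is strictly decreasing in $d_{i,e}$, so the worst-case eavesdropper is the one minimising $d_{i,e}$; denote this minimum by $d^{*}$. Inverting the monotone mapping yields
\begin{equation*}
F_{\gamma_E^*}(x)=\mathbb{P}(\gamma_E^{*}\leq x)=\mathbb{P}\!\left(d^{*}\geq\sqrt{(x/\zeta)^{-1/(m+3)}-Z^{2}}\right),
\end{equation*}
which is well-defined precisely on the support $0\leq x\leq\zeta Z^{-2(m+3)}$, the upper limit corresponding to $d^{*}=0$ (an ED directly beneath the selected LED).

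Next I would apply the void probability for a homogeneous PPP of intensity $\lambda_E$: for a disc of radius $r$ centred at the projection of transmitter $i$ onto the work plane, $\mathbb{P}(d^{*}\geq r)=e^{-\lambda_E\pi r^{2}}$. Substituting $r^{2}=(x/\zeta)^{-1/(m+3)}-Z^{2}$ yields the stated CDF directly. The PDF then follows from $f_{\gamma_E^*}(x)=\mathrm{d}F_{\gamma_E^*}(x)/\mathrm{d}x$ by the chain rule: the derivative of $-(x/\zeta)^{-1/(m+3)}$ with respect to $x$ equals $\tfrac{1}{(m+3)x}(x/\zeta)^{-1/(m+3)}$, which after multiplication by $\lambda_E\pi$ and the exponential factor reproduces the claimed expression.

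The one subtlety to flag is that the void probability formula as written applies to a PPP on the infinite plane, whereas in the system model the EDs are confined to a finite room. In the regime that governs $\gamma_E^*$, however, the relevant values of $r$ are small (the nearest ED is, with high probability, well within the room), and any boundary correction to $\pi r^{2}$ is negligible; this is the same approximation that underlies many stochastic-geometry analyses of indoor networks. Beyond that, the remaining work is routine calculus—inverting a monotone mapping and differentiating a composition—so I expect no deeper obstacle than bookkeeping the exponent $-1/(m+3)$ through the chain rule.
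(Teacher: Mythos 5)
Your argument is essentially the paper's own: both rest on the nearest-point (void/contact) distance distribution of the homogeneous PPP and the monotone map $\gamma=\zeta(d^2+Z^2)^{-(m+3)}$, the only cosmetic difference being that you obtain the CDF directly from the void probability and differentiate, whereas the paper first transforms the distance PDF and then integrates back to the CDF. Note that your chain-rule computation yields the exponential factor $e^{\lambda_E \pi ( Z^2 -(x/\zeta)^{-1/(m+3)})}$ in the PDF, which agrees with the paper's appendix and exposes a sign typo in the exponent of the PDF as printed in the lemma statement.
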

\begin{proof}
See Appendix \ref{sec:app_B}.
\end{proof}

According to (\ref{SOP}), the upper and lower bounds of the SOP can be written as
\begin{align} 
P_{\text{SO}}^{\text{upper}}&= \mathbb{P} ( C_s^{\text{lower}}  \leq C_{\text{th}}) \nonumber\\
&= \mathbb{P}\left(\frac{6\gamma_U+3\pi e}{\pi e \gamma^*_{E}+3 \pi e} \leq 2^{2C_{\text{th}}}\right)  \nonumber\\
&=\mathbb{P}(\gamma_U \leq a \gamma^*_{E} + 3a -\pi e/2 )
\end{align}
and 
\begin{align}
P_{\text{SO}}^{\text{lower}}&= \mathbb{P} ( C_s^{\text{upper}}  \leq C_{\text{th}})\nonumber \\
&= \mathbb{P}\left(\frac{\gamma_U+1}{ \gamma^*_{E}+1} \leq 2^{2C_{\text{th}}}\right)  \nonumber\\
&=\mathbb{P}(\gamma_U \leq b \gamma^*_{E} + b-1 ) 
\end{align}
respectively, where $a=\pi e 2^{2C_{\text{th}}}/6$ and $b=2^{2C_{\text{th}}}$. Thus, the upper and lower bounds on the SOP can be calculated by appropriately integrating over the PDFs of $\gamma_U$ and $\gamma_E^*$.

Firstly, the upper bound on the SOP can be calculated to yield
\begin{equation} \begin{split} \label{eq:SOP}
P_{\text{SO}}^{\text{upper}} &= \underbrace{\int_{y_1}^{y_4} \int_{\frac{y-3a+\pi e/2}{a}}^{\frac{y_4-3a+\pi e/2}{a}} f_{\gamma_E^*}(x)f_{\gamma_U}(y) \diff x \diff y}_{U_1} + \underbrace{\int_{y_1}^{y_4} \int_{\frac{y_4-3a+\pi e/2}{a}}^{y_4} f_{\gamma_E^*}(x)  f_{\gamma_U}(y) \diff x \diff y}_{U_2}.
\end{split}
\end{equation}
Here, we ignore the $(-3a+\pi e/2)/a$ term in the integration limits, because it is small enough\footnote{The absolute value of this term is less than 3 for $C_{\text{th}}=1$ bit/Hz/s, while $y_1/a$ is larger than $5 \times 10^3$ with the parameters used in Section V.} not to meaningfully affect our calculation. Thus, we calculate the first term $U_1$ to yield (\ref{eq:SOP1}) at the top of the next page. Then, the closed-form expressions for $J_1$, $J_2$, and $J_3$ in (\ref{eq:SOP1}) can be calculated according to (\ref{eq:SOPa}), (\ref{eq:SOPb}), and (\ref{eq:SOPc}).
\begin{figure*}[t]
\label{eq:SOP1}\begin{align}\label{eq:SOP1}  \nonumber
U_1 &\approx  \int_{y_1}^{y_4} \left( F_{\gamma_E^*} \left( \frac{y_4}{a} \right)-F_{\gamma_E^*} \left( \frac{y}{a} \right)  \right) f_{\gamma_U}(y) \diff y = F_{\gamma_E^*} \left( \frac{y_4}{a} \right)\int_{y_1}^{y_4}f_{\gamma_U}(y) \diff y  - \int_{y_1}^{y_4} F_{\gamma_E^*} \left( \frac{y}{a}  \right) f_{\gamma_U}(y) \diff y\nonumber \\
&=F_{\gamma_E^*} \left( \frac{y_4}{a} \right) -  \underbrace{\int_{y_1}^{y_2} e^{\lambda_E \pi \left( Z^2 - \left( \frac{y}{a \zeta}  \right) ^{-\frac{1}{m+3}}\right)} \frac{\left(\frac{y}{\zeta}\right)^{-\frac{1}{m+3}} \left(3  {K_3}\sqrt{\left(\frac{y}{\zeta}\right)^{-\frac{1}{m+3}}-Z^2}+2  {K_4}\right)} {8 {\hat{a}}^2  {\hat{k}} (m+3) y}   \diff y}_{J_1} \nonumber \\
&~~~ -  \underbrace{\int_{y_2}^{y_3} e^{\lambda_E \pi \left( Z^2 - \left( \frac{y}{a \zeta}  \right) ^{-\frac{1}{m+3}}\right)} \frac{\left(\frac{y}{\zeta}\right)^{-\frac{1}{m+3}} \left(3  {K_1}\sqrt{\left(\frac{y}{\zeta}\right)^{-\frac{1}{m+3}}-Z^2}+2  {K_2}\right)} {8 {\hat{a}}^2  {\hat{k}} (m+3) y} \diff y }_{J_2} \nonumber \\
&~~~-\underbrace{\int_{y_3}^{y_4} e^{\lambda_E \pi \left( Z^2 - \left( \frac{y}{a \zeta}  \right) ^{-\frac{1}{m+3}}\right)}  \frac{\pi  \left(\frac{y}{\zeta}\right)^{-\frac{1}{m+3}}}{4  {\hat{a}}^2 {\hat{k}} (m+3) y}  \diff y}_{J_3} 
\end{align}
\hrulefill
\end{figure*}
\begin{figure*}[t]
\begin{subequations}\label{eq:Js}
\begin{align}\label{eq:SOPa}
J_1 &= \frac{1}{8 \hat{a}^2 \hat{k}} \left(  \frac{2 a^{\frac{-1}{m+3}} e^{\lambda_E \pi Z^2}}{\lambda_E \pi} \left(e^{-\lambda_E \pi \left(\frac{y_2}{a \zeta} \right)^{\frac{-1}{m+3}} }- e^{-\lambda_E \pi \left(\frac{y_1}{a \zeta} \right)^{\frac{-1}{m+3}} }\right)K_4 +\frac{3 e^{-\lambda_E \pi Z^2 \left( -1+a^{\frac{1}{m+3}}\right)}K_3}{2 \pi^{3/2}} \right. \nonumber \\
&~ \left. ~~~~\cdot \sum_{i \in \{1,2\} } (-1)^{i+1} \left( \frac{\left(\left( \frac{y_i}{\zeta} \right)^{\frac{-1}{m+3}} -Z^2\right)^{3/2}\left( \sqrt{\pi} - 2 \Gamma \left(\frac{3}{2}, a^{\frac{1}{m+3}}\lambda_E \pi \left(\left( \frac{y_i}{\zeta} \right)^{\frac{-1}{m+3}} -Z^2 \right)    \right)          \right)  }   {\left( a^{\frac{1}{m+3}} \lambda_E \left(\left( \frac{y_i}{\zeta} \right)^{\frac{-1}{m+3}} -Z^2 \right)\right) ^{3/2}   }   \right)  \right) \\  \label{eq:SOPb}
J_2 &= \frac{1}{8 \hat{a}^2 \hat{k}} \left(  \frac{2 a^{\frac{-1}{m+3}} e^{\lambda_E \pi Z^2}}{\lambda_E \pi} \left(e^{-\lambda_E \pi \left(\frac{y_3}{a \zeta} \right)^{\frac{-1}{m+3}} }- e^{-\lambda_E \pi \left(\frac{y_2}{a \zeta} \right)^{\frac{-1}{m+3}} }\right)K_2 +\frac{3 e^{-\lambda_E \pi Z^2 \left( -1+a^{\frac{1}{m+3}}\right)}K_1}{2 \pi^{3/2}} \right. \nonumber \\ 
&~ \left. ~~~~\cdot \sum_{i \in \{2,3\} } (-1)^i \left( \frac{\left(\left( \frac{y_i}{\zeta} \right)^{\frac{-1}{m+3}} -Z^2\right)^{3/2}\left( \sqrt{\pi} - 2 \Gamma \left(\frac{3}{2}, a^{\frac{1}{m+3}}\lambda_E \pi \left(\left( \frac{y_i}{\zeta} \right)^{\frac{-1}{m+3}} -Z^2 \right)    \right)          \right)  }   {\left(a^{\frac{1}{m+3}} \lambda_E \left(\left( \frac{y_i}{\zeta} \right)^{\frac{-1}{m+3}} -Z^2 \right)\right) ^{3/2}   }   \right)  \right) \\  \label{eq:SOPc}
J_3 &= \displaystyle\frac{1}{4 \hat{a}^2 \hat{k} \lambda_E} \left( a^{\frac{-1}{m+3}} e^{\lambda_E \pi Z^2} \left(e^{-\lambda_E \pi \left(\frac{y_4}{a\zeta}  \right)^{\frac{-1}{m+3}}}-e^{-\lambda_E \pi \left(\frac{y_3}{a\zeta}  \right)^{\frac{-1}{m+3}}}         \right)    \right)
\end{align}
\end{subequations}
\hrulefill
\end{figure*}
Finally, the second term  $U_2$ in (\ref{eq:SOP}) can be written as
\begin{equation}
U_2 \approx  F_{\gamma_E^*} \left(y_4 \right)-F_{\gamma_E^*} \left( \frac{y_4}{a} \right).
\end{equation}

The lower bound on the SOP can be calculated to yield
\begin{equation} \begin{split} \label{eq:SOPlower}
P_{\text{SO}}^{\text{lower}} &= {\int_{y_1}^{y_4} \int_{\frac{y-b+1}{b}}^{\frac{y_4-b+1}{b}} f_{\gamma_E^*}(x)f_{\gamma_U}(y) \diff x \diff y} + {\int_{y_1}^{y_4} \int_{\frac{y_4-b+1}{b}}^{y_4} f_{\gamma_E^*}(x)  f_{\gamma_U}(y) \diff x \diff y}.
\end{split}
\end{equation}
Here, we ignore the $(-b+1)/b$ term in a similar manner as in (\ref{eq:SOP}). Therefore, since (\ref{eq:SOPlower}) has an identical expression to (\ref{eq:SOP}), we  have the closed-form of the lower bound on SOP  by  simply changing the variable $a$ to $b$ in the closed-form expression of the upper bound.

From the closed-form expressions for the upper and lower bounds on the SOP, we note that the SOP is inversely proportional to $4 \hat{a}^2 \hat{k}$ (see (\ref{eq:Js})), which is the coverage area of each transmitter. In other words, if the room size is fixed, then the SOP can be decreased by increasing the number of transmitters to reduce the coverage area.

\section{Numerical Results}
\label{sec5}

\begin{table}[t]
\centering
\caption{Simulation Parameters}
\small
\begin{tabular}{  c | c  }
   \Xhline{2\arrayrulewidth}
  \multicolumn{2}{c}{\textbf{Room configuration}} \\  \Xhline{2\arrayrulewidth}
  Length (L) $\times$ Width (W) & $10 \times 10~\mathrm{m^2}$ \\
  Height from the work plane (Z) & $3~\mathrm{m}$ \\
  Number of light fixtures & 4\\
  Number of LEDs per fixture & 9 \\ \Xhline{2\arrayrulewidth}

  \multicolumn{2}{c}{\textbf{LED electrical and optical characteristics}} \\  \Xhline{2\arrayrulewidth}
  Average optical power per LED & $8~\mathrm{W}$\\
  Optical power / current $\eta$ & 5 \\
  Nominal half-intensity angle $\Phi_{1/2}$ & $60^\circ$ \\
  Modulation index $\alpha$ & 0.5 \\  \Xhline{2\arrayrulewidth}

   \multicolumn{2}{c}{\textbf{Optical receiver characteristics}} \\  \Xhline{2\arrayrulewidth}
   Photodetector's responsivity   & 0.54 mA/mW \\
   Photodetector's physical area $A_{PD}$ & $1~\mathrm{cm}^2$ \\
   Lens refractive index $\kappa$ & $1.5$ \\
   Noise power $\sigma^2$ & $-98.35~\mathrm{dBm}$ \\ \hline
\end{tabular}
\label{tb:parameter}
\end{table}

\begin{figure*}[t]
\begin{minipage}[b]{0.5\linewidth}
  \centering
  \centerline{\includegraphics[scale=0.55]{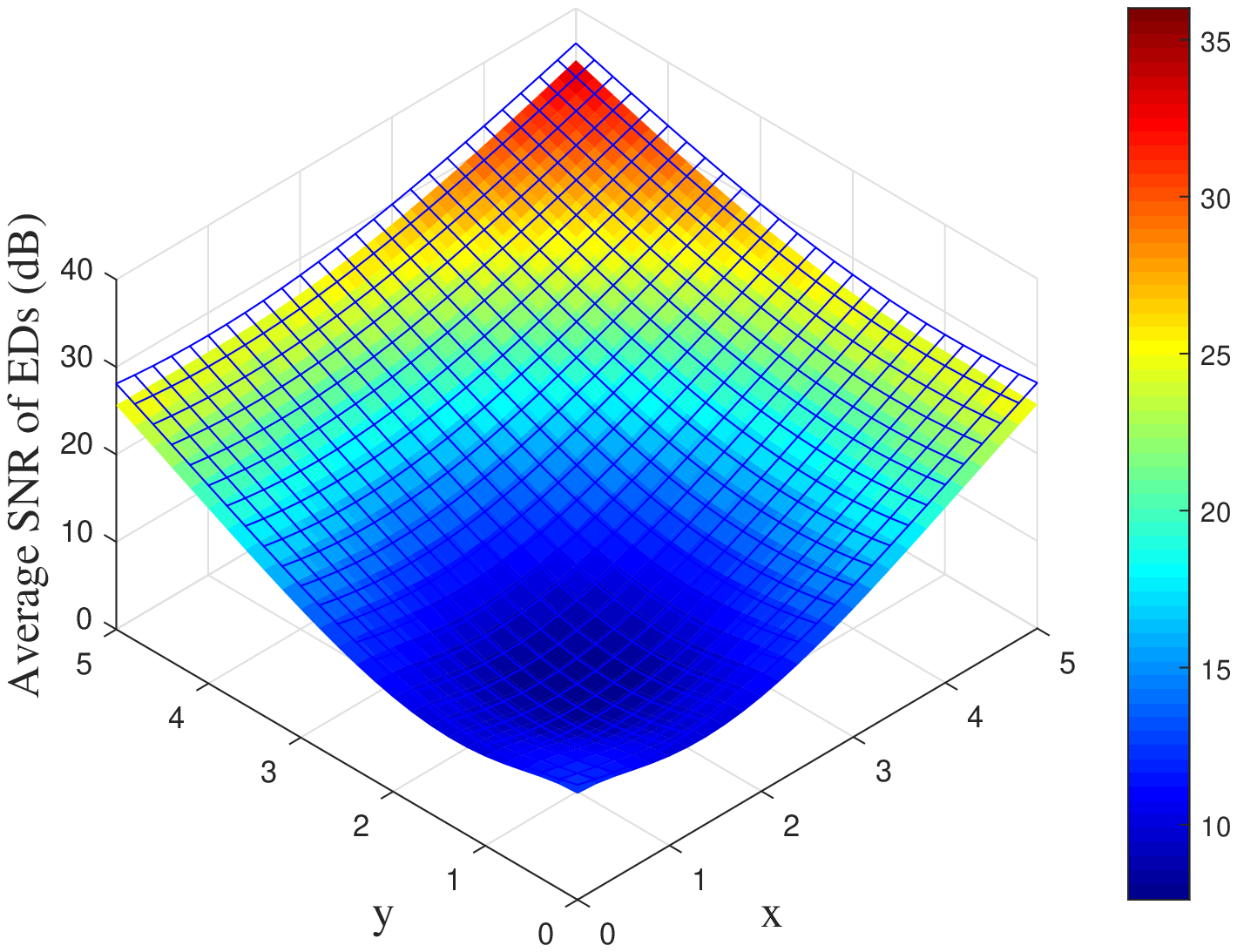}}
 \vspace{0.3cm}
  \centerline{\small (a) The average SNR of EDs}
\end{minipage}
\hfill
\begin{minipage}[b]{0.5\linewidth}
  \centering
  \centerline{\includegraphics[scale=0.55]{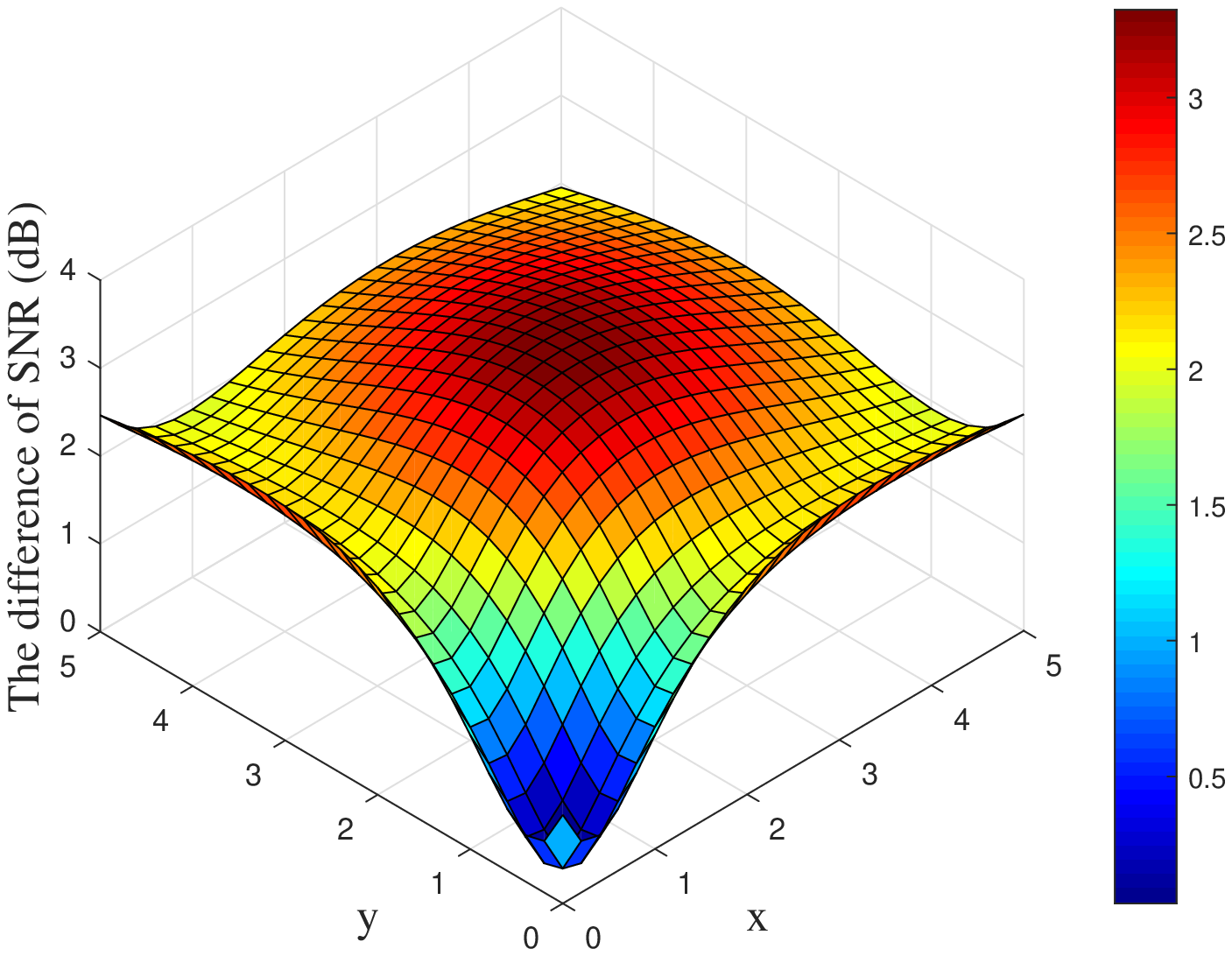}}
  \vspace{0.3cm}
  \centerline{\small (b) The difference in the average SNRs for EDs}
\end{minipage}
 \caption{The average SNR of EDs as a function of the UE location. The bottom surface denotes the result for the optimal beamformer and the top one denotes  LED selection. Four transmitters are located at $(\pm1,\pm1)$. The numerical result for the locations of the UE within the $1$st quadrant are given. The intensity of the ED process is $\lambda_E=0.05$ and the required SNR of the UE is ${\rho}_U=20~\mathrm{dB}$.} \label{fig:Fig3}
\begin{minipage}[b]{0.5\linewidth}
  \centering
  \centerline{\includegraphics[scale=0.55]{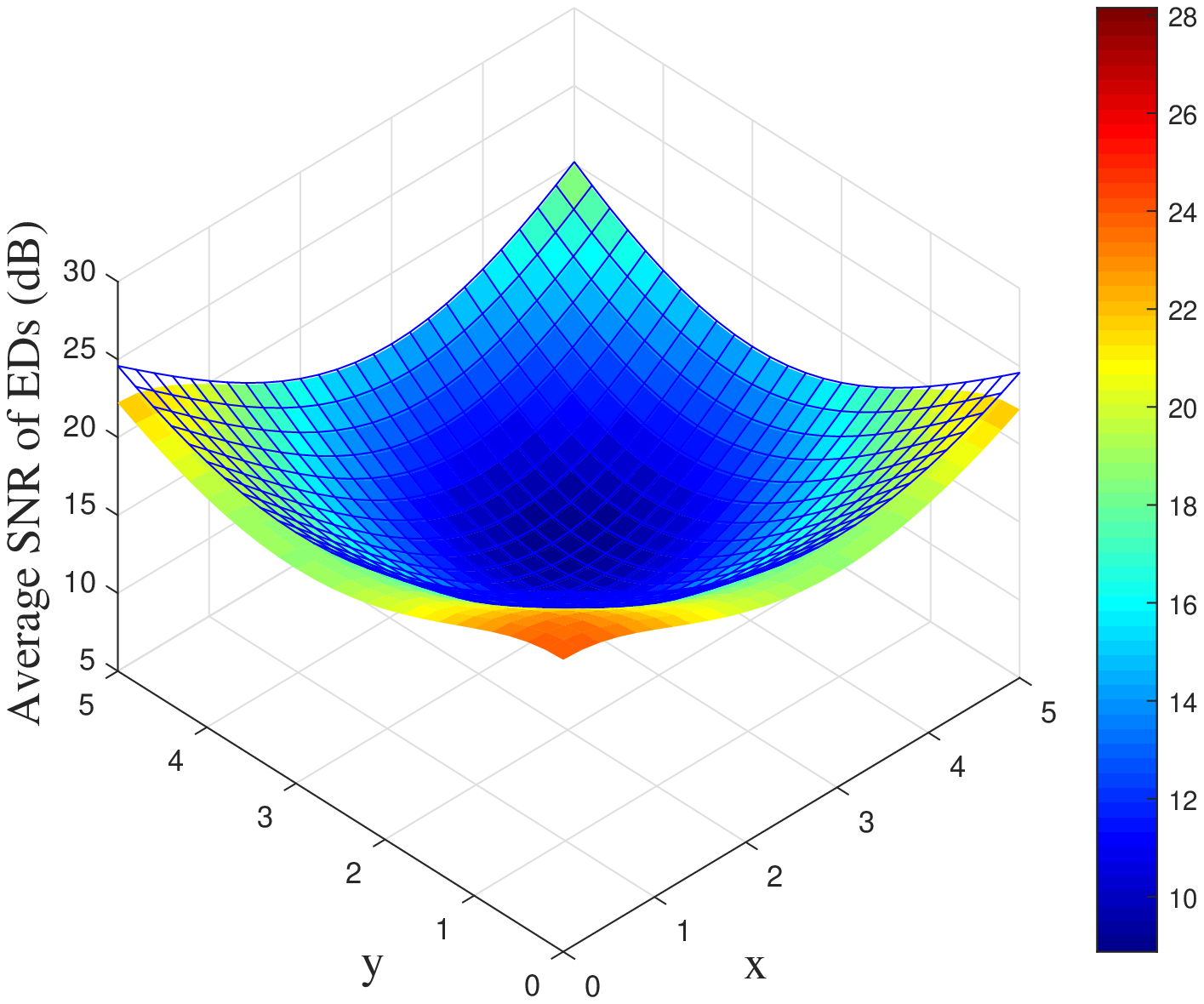}}
 \vspace{0.3cm}
  \centerline{\small (a) The average SNR of EDs}
\end{minipage}
\hfill
\begin{minipage}[b]{0.5\linewidth}
  \centering
  \centerline{\includegraphics[scale=0.55]{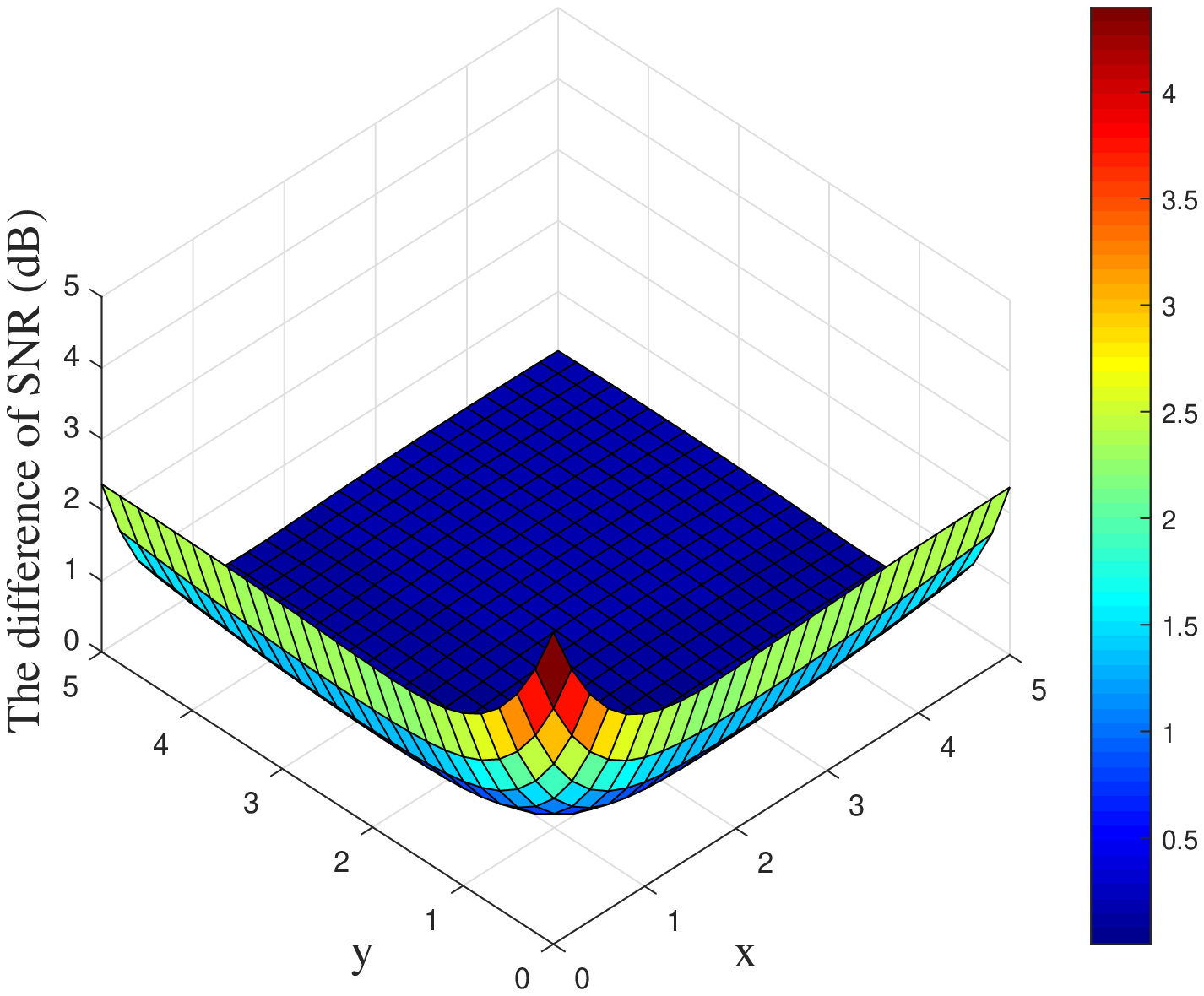}}
  \vspace{0.3cm}
  \centerline{\small (b) The difference in the average SNRs for EDs}
\end{minipage}
 \caption{The average SNR of EDs as a function of the UE location. The bottom surface denotes the result for the optimal beamformer and the top one denotes  LED selection. Four transmitters are located at $(\pm3,\pm3)$. The numerical result for the locations of the UE within the $1$st quadrant are given. The intensity of the ED process is $\lambda_E=0.05$ and the required SNR of the UE is ${\rho}_U=20~\mathrm{dB}$.} \label{fig:Fig4}
\end{figure*}

\begin{figure*}[t]
\begin{minipage}[b]{0.5\linewidth}
  \centering
  \centerline{\includegraphics[scale=0.55]{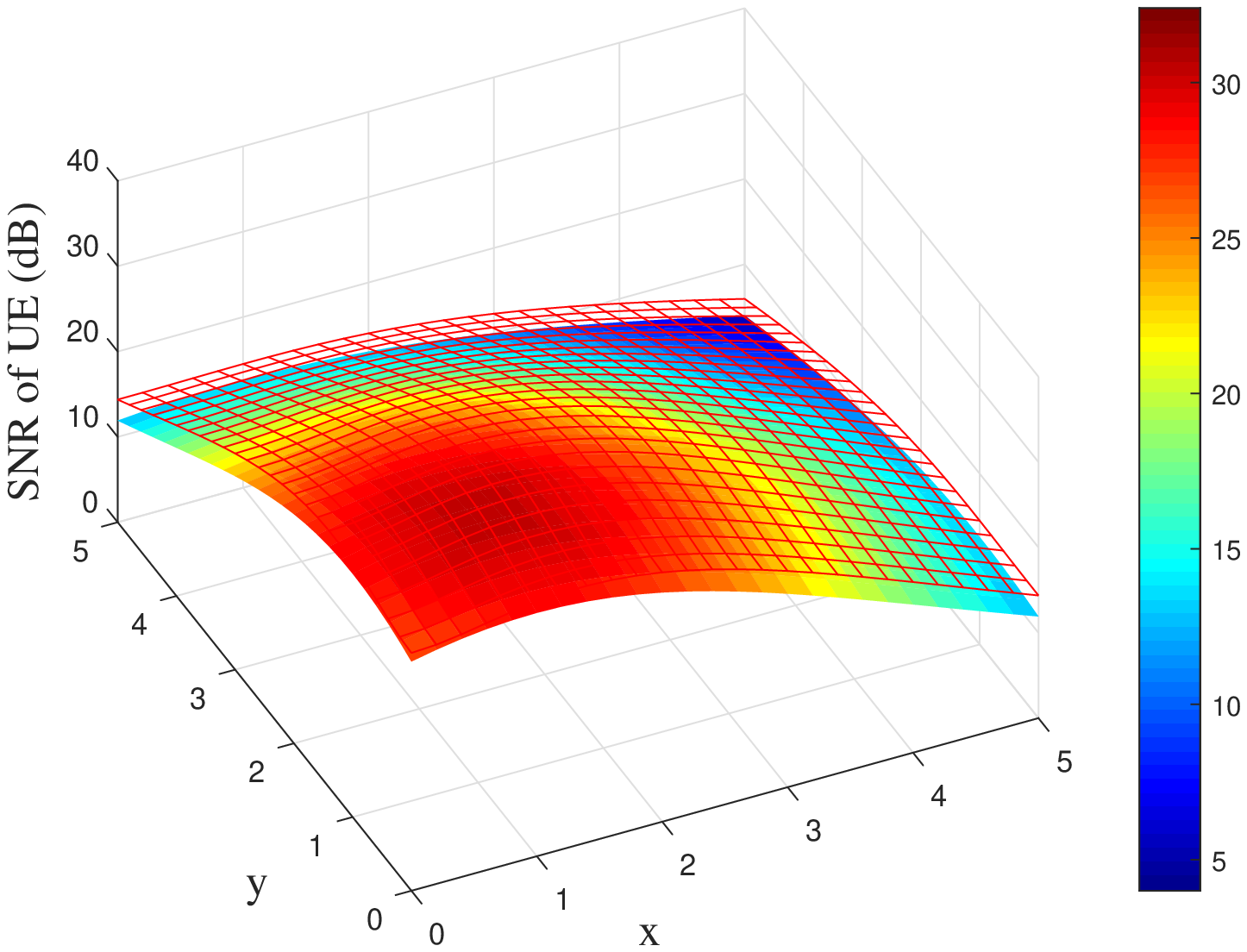}}
 \vspace{0.3cm}
  \centerline{\small (a) The SNR of the UE}
\end{minipage}
\hfill
\begin{minipage}[b]{0.5\linewidth}
  \centering
  \centerline{\includegraphics[scale=0.55]{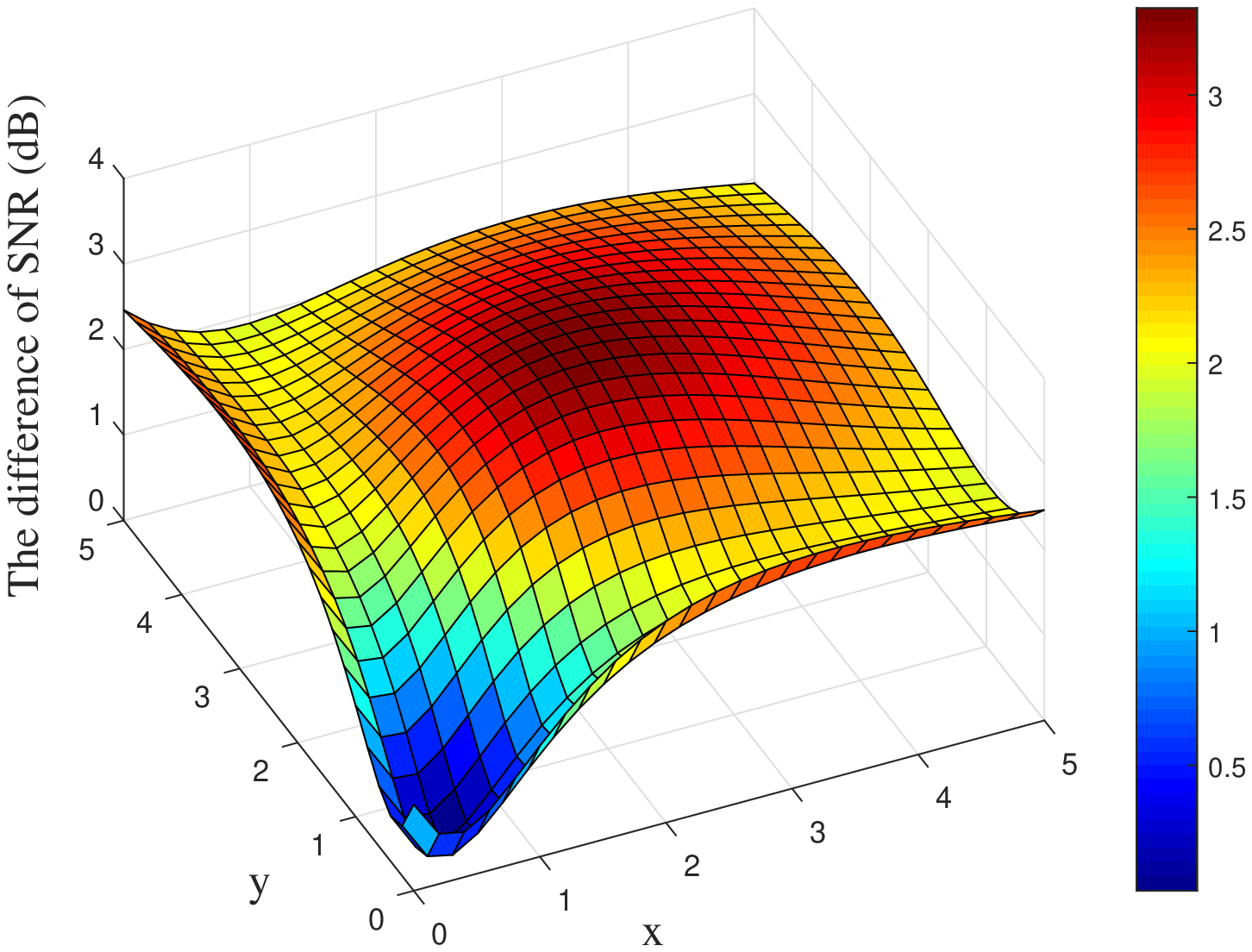}}
  \vspace{0.3cm}
  \centerline{\small (b) The difference in the SNRs for the UE}
\end{minipage}
 \caption{The average SNR of the UE as a function of the UE locations. The top surface denotes the result for the optimal beamformer and the bottom one denotes  LED selection. Four transmitters are located at $(\pm1,\pm1)$. The numerical results for the locations of the UE within the $1$st quadrant are given. The density of the ED process is $\lambda_E=0.05$ and the constraint on the average SNR of EDs is $\overline{\rho}_E=20~\mathrm{dB}$.} \label{fig:Fig5}
\begin{minipage}[b]{0.5\linewidth}
  \centering
  \centerline{\includegraphics[scale=0.55]{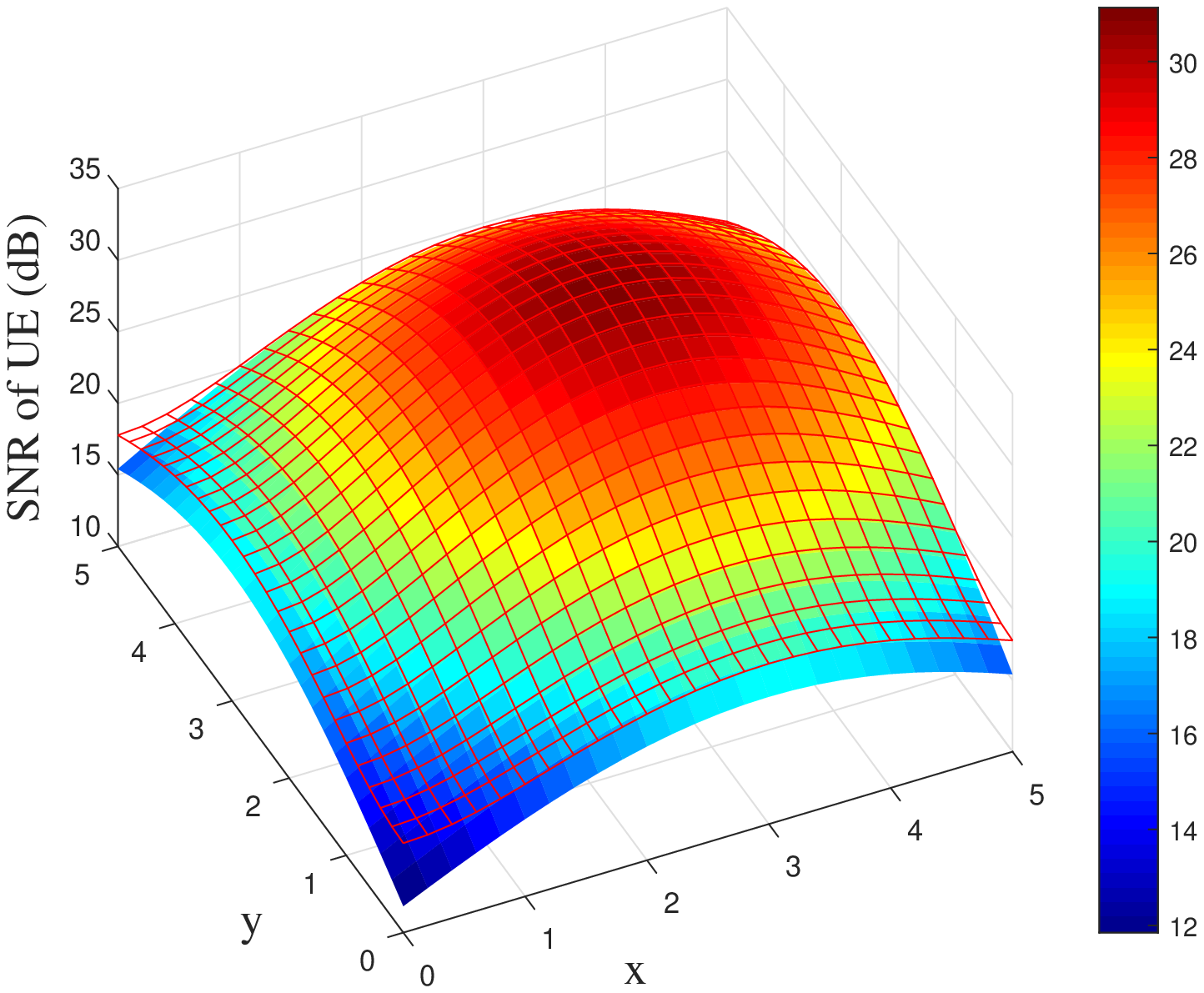}}
 \vspace{0.3cm}
  \centerline{\small (a) The SNR of the UE}
\end{minipage}
\hfill
\begin{minipage}[b]{0.5\linewidth}
  \centering
  \centerline{\includegraphics[scale=0.55]{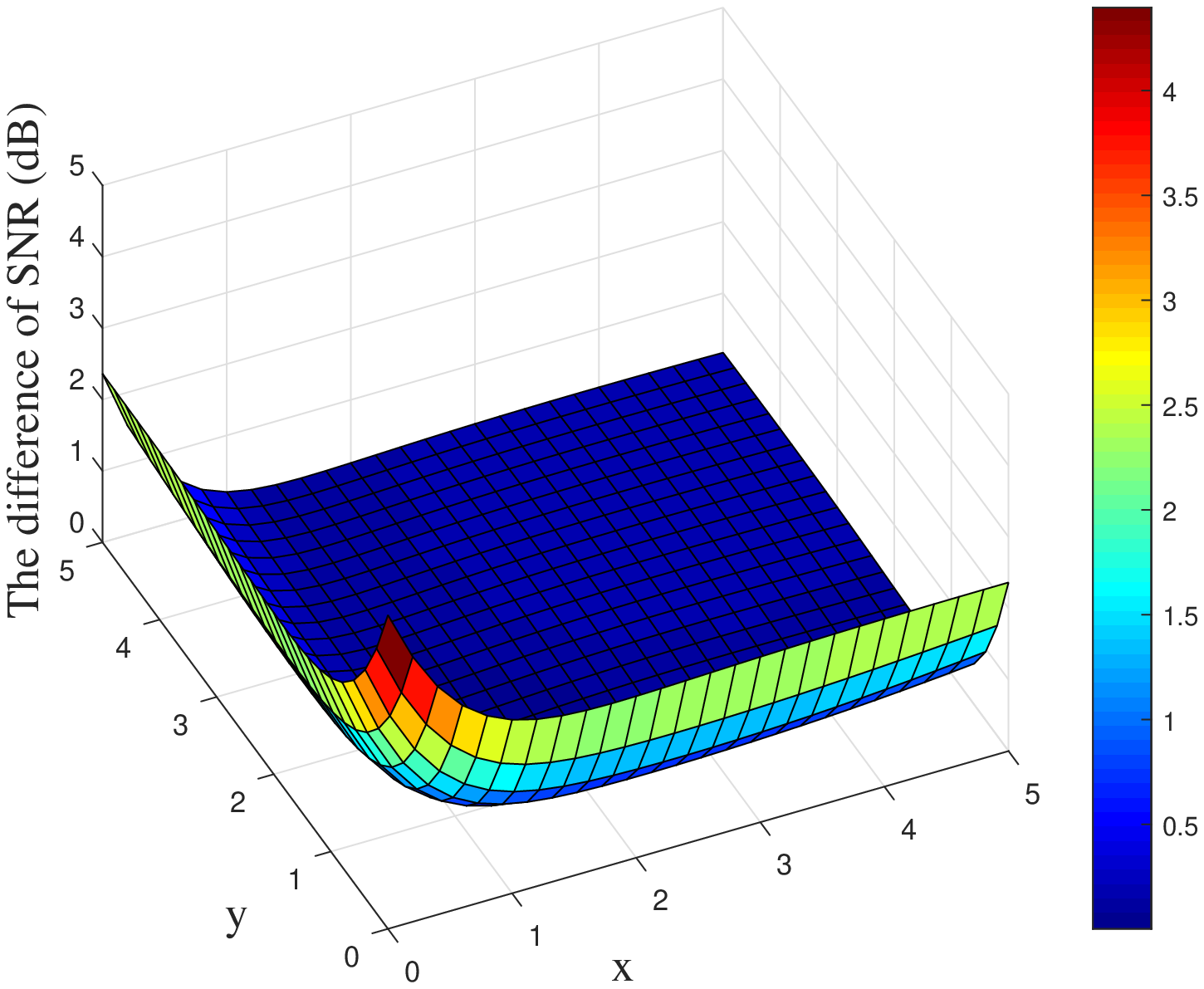}}
  \vspace{0.3cm}
  \centerline{\small (b) The difference in the SNRs for UE}
\end{minipage}
 \caption{The average SNR of the UE as a function of the UE location. The top surface denotes the result for the optimal beamformer and the bottom one denotes  LED selection. Four transmitters are located at $(\pm3,\pm3)$. The numerical results for the locations of the UE within the $1$st quadrant are given. The intensity of the ED process is $\lambda_E=0.05$ and the constraint on the average SNR of EDs  is $\overline{\rho}_E=20~\mathrm{dB}$.} \label{fig:Fig6}
\end{figure*}

In this section, we provide  numerical results to verify our analysis. The room configuration and simulation parameters are provided in Table~\ref{tb:parameter}. We use the Cartesian coordinate system to identify positions of transmitters and receivers, where the center of the room is located at the origin. We consider a VLC network where four transmitters are symmetrically located, i.e., their locations can be described by $(\pm d_0, \pm d_0)$. A homogeneous PPP describes the ED locations.

\subsection{Signal-to-Noise Ratio}
Figs.~\ref{fig:Fig3} and~\ref{fig:Fig4} show the comparison of the average SNR of EDs by using the beamforming and LED selection schemes for the different locations of transmitters, where $\lambda_E=0.05$ and $\rho_U=20~\mathrm{dB}$. Since four transmitters are symmetrically located, we only show the results for the UE locations within the first quadrant. The bottom and top surfaces denote the average SNR of an ED as a function of the UE location when using the beamforming and LED selection schemes, respectively. Firstly, when four transmitters are located closely to each other in the center area of the room as in Fig.~\ref{fig:Fig3}, the transmitters emit a signal with high power to the UE, which is located in the outer area to satisfy $\rho_U$. Due to the broadcasting characteristic of  light, the EDs can eavesdrop the signal easily; thus their average SNR also increases as shown in Fig.~\ref{fig:Fig3}(a). In this case, the EDs can achieve a higher average SNR than $\rho_U$ if the UE is away from the transmitter. On the contrary, when the distance between the transmitter and the UE decreases, one can see that the average SNR of the EDs also decreases because of the decrease in transmitted power. Furthermore, in Fig.~\ref{fig:Fig3}(b), it is shown that the difference in the average SNR of EDs between  optimal beamforming and LED selection is small when the UE is located near to the transmitter. However,  beamforming can outperform  LED selection when the UE is located in the outer area. On the other hand, we  see that when the four transmitters are adequately separated from each other as in Fig.~\ref{fig:Fig4}, both  beamforming and  LED selection  exploit the spatial advantage to decrease the average SNR of EDs. The average SNR of EDs is less than $\rho_U$ for most of the area except for around the center point of the transmitters. 

Figs.~\ref{fig:Fig5} and~\ref{fig:Fig6} show a comparison of the SNR of the UE using  beamforming and LED selection schemes for  different locations of transmitters, where $\lambda_E=0.05$ and $\overline{\rho}_E=20~\mathrm{dB}$. The top and bottom surfaces denote the SNR of the UE as a function of the UE location when using  beamforming and LED selection, respectively. Firstly, when the transmitters are located closely together as in Fig.~\ref{fig:Fig5}, the SNR of the UE can be high when the UE is located near to the transmitter as shown in Fig.~\ref{fig:Fig5}(a). However, when the UE moves away from the transmitter, it is difficult for the UE to achieve an SNR higher than  $\overline{\rho}_E$. This is because the transmitter should convey the signal with high enough power to reach the UE, which enables the EDs to overhear the signal easily. In addition, we see from Fig.~\ref{fig:Fig5}(b) that the SNR difference for beamforming and  selection is not significant for the UE  located near the transmitter, but it increases as the distance between the transmitter and the UE grows. In contrast, when the transmitters are sufficiently separated as in Fig.~\ref{fig:Fig6}, we  note that the UE can achieve a higher SNR than $\overline{\rho}_E$ over almost the entire area of the room except for at the center point of the room as shown in Fig.~\ref{fig:Fig6}(a). This can be possible by selectively transmitting a signal to the UE without excessively increasing the signal power of other transmitters, thus exploiting the spatial benefit. Similarly, from Fig.~\ref{fig:Fig6}(b), we  note that the SNR difference for the UE is not significant when the UE is close to the transmitter.

These results show that optimal beamforming effectively transmits a signal to the UE trying not to expose the signal to EDs when the transmitters are adequately separated. Moreover, it is shown that  beamforming has better performance than  LED selection in all cases. However, when the transmitters are adequately separated, and the UE is near to the transmitters, it is shown that the difference in  secrecy performance between the two schemes is small. Therefore, considering the computational complexity and feasibility of  optimal beamforming, we conclude that LED selection may be an attractive option in some scenarios.

\begin{figure}[t]
\centering
\includegraphics[scale=0.6]{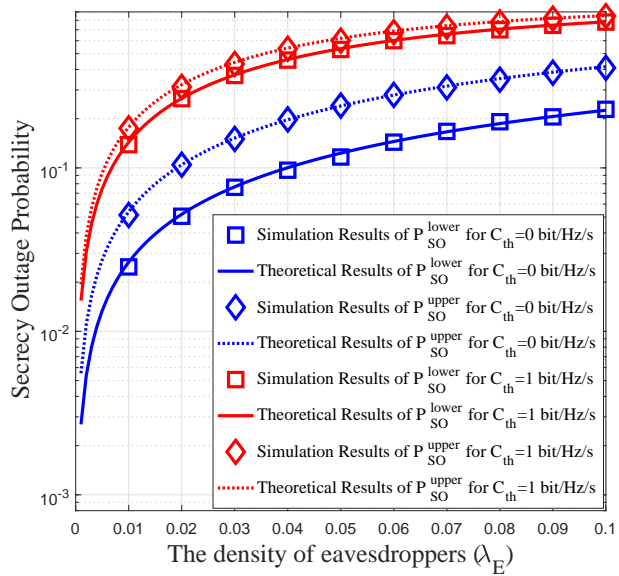}
\caption{The upper and lower bounds on the SOP with  LED selection  for  different $C_{\text{th}}$ according to different intensities of EDs $\lambda_E$, where $N=4 \times 4$, $g=1~\mathrm{m}$, $\hat{a}=1~\mathrm{m}$, $\hat{k}=1.25$, $L=10~\mathrm{m}$, and $W=12 ~\mathrm{m}$.}
\label{fig:Fig7}
\end{figure}

\begin{figure}[t]
\centering
\includegraphics[scale=0.6]{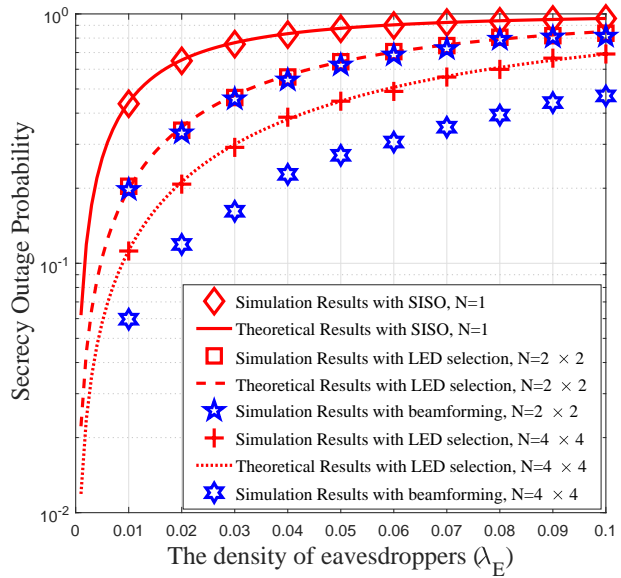}
\caption{A comparison of the upper bound on the SOP between  LED selection and beamforming for  different numbers of transmitters according to different intensities of EDs $\lambda_E$, where $C_{\text{th}}=0.5$ bit/Hz/s, $g=1~\mathrm{m}$, $L=10~\mathrm{m}$, and $W=12~\mathrm{m}$.}
\label{fig:Fig8}
\end{figure}

\subsection{Secrecy Outage Probability}
Fig.~\ref{fig:Fig7} shows the upper and lower bounds on the SOP with  LED selection for different intensities of EDs and different values of $C_{\text{th}}$, where $N=4 \times 4$, $g=1~\mathrm{m}$, $\hat{a}=1~\mathrm{m}$, $\hat{k}=1.25$, $L=10~\mathrm{m}$, and $W=12~\mathrm{m}$. Both  simulated results and theoretical results are presented, which are shown to perfectly match. As can be seen from the figure, both the SOP upper and lower bounds increase as $\lambda_E$ and $C_{\text{th}}$ increase, as expected.

Fig.~\ref{fig:Fig8} shows a comparison of the upper bound on  SOP between  beamforming and LED selection for  different numbers of transmitters and different values of $\lambda_E$, where $C_{\text{th}}=0.5$ bit/Hz/s, $g=1~\mathrm{m}$, $L=10~\mathrm{m}$, and $W=12~\mathrm{m}$. It can be seen that as the number of transmitters increases, the upper bound on SOP decreases. Furthermore, when the number of transmitters is small, i.e., $N= 2\times2$, the difference in the SOP for  beamforming and LED selection is small. However, when $N$ is large, we can see that the upper bound on the SOP with beamforming is less than for LED selection. Since a large number of transmitters can exploit the excessive spatial degrees of freedom to steer the signal toward the UE, the transmitters can significantly increase the SNR of the UE while suppressing the signal everywhere else inside the room. However, we need to consider that finding the maximum eigenvalue and its associated eigenvector with a large number of transmitters requires high computation complexity, which increases proportionally to $N^2$. 

\section{Conclusion}
In this paper, we studied  optimal MISO beamforming schemes and  a suboptimal LED selection scheme to enhance the secrecy performance in VLC systems when multiple EDs are randomly distributed throughout the communication region. By using the MISO beamforming scheme, we can minimize the average SNR of EDs (or indeed the worst case SNR of EDs) and maximize the SNR of the UE with only statistical information about ED locations. The LED selection scheme is not superior to the optimal beamformer in the respect of secrecy performance; however, when the UE is located near to one of the transmitters, LED selection provides a good practical solution to enhancing secrecy performance without high computational complexity. Based on LED selection, closed-form approximations for the upper and lower bounds on the SOP were derived. Our results provide useful insight and analytic tools that can be used to enhance the secrecy in VLC systems and give a solid basis for further study.

\appendices
\section{Proof of the Lemma 1}
\label{sec:app_A}

\begin{figure}[t]
  \centering
  \centerline{\includegraphics[scale=0.5]{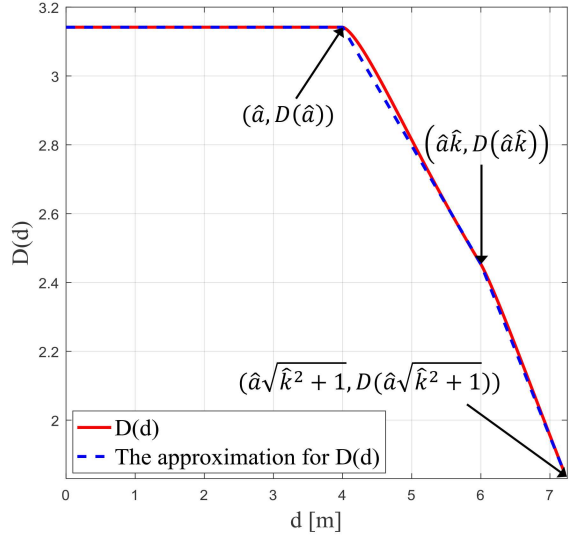}}
 \caption{\small An example of $\hat{D}(d)$ for $\hat{a} = 4~\mathrm{m}$, $\hat{k}=1.5$. Note $\hat{a}\hat{k}=6~\mathrm{m}$ and $\hat{a}\sqrt{\hat{k}^2+1}=7.21~\mathrm{m}$. $\hat{a}\hat{k}=6~\mathrm{m}$ and $\hat{a}\sqrt{\hat{k}^2+1}=7.21~\mathrm{m}$} \label{fig:piecewise}
\end{figure}

The UE is randomly located according to a homogeneous BPP in the shaded area. Therefore, the distance in the work plane between the transmitter and the UE $d_U$ cannot exceed $\hat{a}\sqrt{\hat{k}^2+1}$. Thus, the CDF of $d_U$ is given by
\begin{equation}
\begin{array}{lll}
F_{d_{U}}(d)=\displaystyle\frac{A(d)}{4 \hat{k}\hat{a}^2} &\mbox{for} &0 < d \leq \hat{a}\sqrt{\hat{k}^2+1}
\end{array}
\label{eq:CDF_of_d_U}
\end{equation}
where $A(d)$ denotes the area of the circle bounded by the rectangle as shown in Fig.~\ref{fig:SOP_configuration}. $A(d)$ is described as (\ref{eq:area}) at the top of the this page.
\begin{figure*}[t]
\begin{equation}\begin{split}
A(d)=
 \left \{ \begin{array}{l l l}
\displaystyle{\pi d^2}    ~~~~~~~~~~~~~~~~~~~~~~~~~~~~~~~~~~~~~~~~~~~~~~~~~~~~~~~~~~~~~~~~~~~~ \mbox{for}  ~~ 0 < d \leq \hat{a} \\ \\
\displaystyle{\pi d^2 - 2 \left( d^2 \arccos \left( \frac{\hat{a}}{d} \right) - \hat{a} \sqrt{d^2 - \hat{a}^2} \right)}   ~~~~~~~~~~~~~~~~~~~~~~~~~~~~~  \mbox{for}   ~~                         \hat{a} < d \leq \hat{k}\hat{a}  \\  \\
\displaystyle{\pi d^2 - 2 \left( d^2 \arccos \left( \frac{\hat{a}}{d} \right) - \hat{a} \sqrt{d^2 - \hat{a}^2} \right) }                  {-2 \left( d^2 \arccos \left( \frac{\hat{a}\hat{k}}{d} \right) - \hat{a}\hat{k} \sqrt{d^2 - (\hat{a}\hat{k})^2} \right) } \\ ~~~~~~~~~~~~~~~~~~~~~~~~~~~~~~~~~~~~~~~~~~~~~~~~~~~~~~~~~~~~~~~~~~~~~~~~~ \mbox{for}  ~~  \hat{k}\hat{a} < d \leq \hat{a}\sqrt{\hat{k}^2+1}    \\
\end{array} \right.
\label{eq:area}
\end{split}
\end{equation}
\hrulefill
\end{figure*}

Thus, $A(d)$ can be described as  $A(d)=D(d) \cdot d^2$, where $D(d)$ can be approximated by applying a piecewise approximation with a linear function of $d$, i.e.,
\begin{equation}\begin{split}
\hat{D}(d)=\left \{ \begin{array}{lll}
\pi &\mbox{for}& 0 < d \leq \hat{a} \\
K_1d+K_2 &\mbox{for} &\hat{a} < d \leq \hat{k}\hat{a}  \\
K_3d+K_4 &\mbox{for} &\hat{k}\hat{a} < d \leq \hat{a}\sqrt{\hat{k}^2+1}. \end{array}\right.
\end{split}\end{equation}
To find the optimal $K_i$ for $i \in \{1,2,3,4\}$, we evaluate three coordinates $D(\hat{a})$, $D(\hat{a}\hat{k})$, and $D(\hat{a}\sqrt{\hat{k}^2+1})$ as shown in Fig.~\ref{fig:piecewise}. Using these values, we can easily calculate the approximation constants as shown in~(\ref{eq:K}).

Finally, from (\ref{simplified_h}) and (\ref{eq:SNR1}), the SNR of the UE can be described as a function of $d$ according to
\begin{equation}
\label{eq:gamma}
\gamma_{U}(d)=\displaystyle \frac{\alpha^2 I_{DC}^2 K^2 (d^2 + Z^2)^{-(m+3)}}{\sigma^2} = \zeta (d^2+Z^2)^{-(m+3)}
\end{equation}
where $\zeta=(\alpha^2 I_{DC}^2 K^2)/\sigma^2$. Thus, the CDF and PDF of $\gamma_U$ can be written as (\ref{eq:CDF_SNR_U}) and (\ref{eq:PDF_SNR_U}), respectively.

\section{Proof of the Lemma 2}
\label{sec:app_B}
Since the EDs are randomly distributed according to a homogeneous PPP $\Phi_{E}$ with intensity $\lambda_E$ on the work plane, the PDF of the minimum distance in the work plane between the selected transmitter and the nearest ED, i.e., $d_{E}^* = \underset{e \in \Phi_E}{\min}~{d_{E_e}}$, where $d_{E_e}$ is the distance in the work plane between the selected transmitter and $E_e$, is given by
\begin{equation}
F_{d_{E}^*}(d)=1 - \exp(-\lambda_E \pi d^2)
\end{equation}
for $0 \leq d \leq \infty$.  This follows from  contact distance distribution to the nearest point of the PPP~\cite{baddeley2015spatial}. Here, the unbounded upper limit implies that the number of EDs can be zero. The PDF of $d_{E}^*$ can be calculated as $f_{d_{E}^*}(d)=2 \lambda_E \pi d \exp(-\lambda_E \pi d^2)$ for $0 \leq d \leq \infty$. Therefore, since $\gamma_E^*$ also has the same relation with $d_{E}^*$ as~(\ref{eq:gamma}), the PDF and CDF of $\gamma_E^*$ can be calculated as below.
\begin{align}
f_{\gamma_E^*}(x)&=\frac{\lambda_E \pi \left(\frac{x}{\zeta} \right)^{-\frac{1}{m+3}}}{x(m+3)} e^{-\lambda_E \pi \left( -Z^2+\left(\frac{x}{\zeta} \right)^{-\frac{1}{m+3}}\right)}, \\
F_{\gamma_E^*}(x)&= \int_0^x f_{\gamma_E^*}(u) \diff u \nonumber \\ \nonumber
&= \int_0^x \frac{\lambda_E \pi \left(\frac{u}{\zeta} \right)^{-\frac{1}{m+3}}}{u(m+3)} e^{-\lambda_E \pi \left( -Z^2+\left(\frac{u}{\zeta} \right)^{-\frac{1}{m+3}}\right)} \diff u \\ \nonumber
&=\int_{\left(\frac{x}{\zeta} \right)^{-\frac{1}{m+3}}}^\infty \lambda_E \pi e^{-\lambda_E \pi (v-Z^2)} \diff v \\
&=e^{\lambda_E \pi \left( Z^2 -\left(\frac{x}{\zeta} \right)^{-\frac{1}{m+3}}\right)}
\end{align}
for $0 \leq x \leq \zeta Z^{-2(m+3)}$, respectively, where $v=\left( \frac{u}{\zeta}\right)^{-\frac{1}{m+3}}$.


\ifCLASSOPTIONcaptionsoff
  \newpage
\fi

\bibliographystyle{IEEEtran}
\bibliography{reference_sunghwan}

\begin{thebibliography}{10}
\providecommand{\url}[1]{#1}
\csname url@samestyle\endcsname
\providecommand{\newblock}{\relax}
\providecommand{\bibinfo}[2]{#2}
\providecommand{\BIBentrySTDinterwordspacing}{\spaceskip=0pt\relax}
\providecommand{\BIBentryALTinterwordstretchfactor}{4}
\providecommand{\BIBentryALTinterwordspacing}{\spaceskip=\fontdimen2\font plus
\BIBentryALTinterwordstretchfactor\fontdimen3\font minus
  \fontdimen4\font\relax}
\providecommand{\BIBforeignlanguage}[2]{{%
\expandafter\ifx\csname l@#1\endcsname\relax
\typeout{** WARNING: IEEEtran.bst: No hyphenation pattern has been}%
\typeout{** loaded for the language `#1'. Using the pattern for}%
\typeout{** the default language instead.}%
\else
\language=\csname l@#1\endcsname
\fi
#2}}
\providecommand{\BIBdecl}{\relax}
\BIBdecl

\bibitem{lifi}
H.~Haas, L.~Yin, Y.~Wang, and C.~Chen, ``What is \textsc{L}i\textsc{F}i?''
  \emph{Journal of Lightwave Technology}, vol.~34, no.~6, pp. 1533--1544, March
  2016.

\bibitem{5gwillbe}
J.~G. Andrews, S.~Buzzi, W.~Choi, S.~V. Hanly, A.~Lozano, A.~C.~K. Soong, and
  J.~C. Zhang, ``What will \textsc{5G} be?'' \emph{IEEE Journal on Selected
  Areas in Commun.}, vol.~32, no.~6, pp. 1065--1082, June 2014.

\bibitem{A.D75}
A.~D. Wyner, ``The wire-tap channel,'' \emph{Bell Syst. Tech. J.}, vol.~54, pp.
  1355--1387, Jan. 1975.

\bibitem{M.H08}
M.~Haenggi, ``The secrecy graph and some of its properties,'' in \emph{Proc.
  IEEE Int. Symp. Inf. Theory in Toronto, Canada}, July 2008.

\bibitem{P.C08}
P.~C. Pinto, J.~Barros, and M.~Z. Win, ``Physical-layer security in stochastic
  wireless networks,'' in \emph{Proc. IEEE Int. Conf. Commun. Syst. in
  Guangzhou, China}, Nov. 2008.

\bibitem{X.Z11}
X.~Zhou, R.~K. Ganti, J.~G. Andrews, and A.~Hjorungnes, ``On the throughput
  cost of physical layer security in decentralized wireless networks,''
  \emph{IEEE Trans. on Wireless Commun.}, vol.~10, no.~8, pp. 2764--2775, Aug.
  2011.

\bibitem{G.G14}
G.~Geraci, S.~Singh, J.~G. Andrews, J.~Yuan, and I.~B. Collings, ``Secrecy
  rates in broadcast channels with confidential messages and external
  eavesdroppers,'' \emph{IEEE Trans. on Wireless Commun.}, vol.~13, no.~5, pp.
  2931--2943, May 2014.

\bibitem{T.X14}
T.~X. Zheng, H.~M. Wang, and Q.~Yin, ``On transmission secrecy outage of a
  multi-antenna system with randomly located eavesdroppers,'' \emph{IEEE
  Commun. Lett.}, vol.~18, no.~8, pp. 1299--1302, Aug. 2014.

\bibitem{G.C17}
G.~Chen, J.~P. Coon, and M.~D. Renzo, ``Secrecy outage analysis for downlink
  transmissions in the presence of randomly located eavesdroppers,'' \emph{IEEE
  Trans. Inform. Forensics and Security, to appear}, Apr. 2017.

\bibitem{lampeJSAC}
A.~Mostafa and L.~Lampe, ``Physical-layer security for \textsc{MISO} visible
  light communication channels,'' \emph{IEEE Journal on Selected Areas in
  Commun.}, vol.~33, no.~9, pp. 1806--1818, Sep. 2015.

\bibitem{lampeICC}
------, ``Physical-layer security for indoor visible light communications,'' in
  \emph{IEEE ICC in Sydney, Australia}, June 2014.

\bibitem{lampeGlobecom}
------, ``Securing visible light communications via friendly jamming,'' in
  \emph{IEEE Globecom Workshops in Austin, USA}, Dec. 2014.

\bibitem{alouini1}
H.~Zaid, Z.~Rezki, A.~Chaaban, and M.~S. Alouini, ``Improved achievable secrecy
  rate of visible light communication with cooperative jamming,'' in \emph{IEEE
  GlobalSIP in Orlando, USA}, Dec. 2015.

\bibitem{alouini2}
M.~A. Arfaoui, Z.~Rezki, A.~Ghrayeb, and M.~S. Alouini, ``On the secrecy
  capacity of \textsc{MISO} visible light communication channels,'' in
  \emph{IEEE Globecom in Washington D.C., USA}, Dec. 2016.

\bibitem{17S.C}
S.~Cho, G.~Chen, and J.~P. Coon, ``Secrecy analysis in visible light
  communication systems with randomly located eavesdroppers,'' in \emph{2017
  IEEE ICC Workshops in Paris, France}, May 2017.

\bibitem{TKomine}
T.~Komine and M.~Nakagawa, ``Fundamental analysis for visible-light
  communication system using \textsc{LED} lights,'' \emph{IEEE Consumer
  Electronics}, vol.~50, no.~1, pp. 100--107, Feb. 2004.

\bibitem{optic_capacity}
A.~Lapidoth, S.~M. Moser, and M.~A. Wigger, ``On the capacity of free-space
  optical intensity channels,'' \emph{IEEE Trans. on Inform. Theory}, vol.~55,
  no.~10, pp. 4449--4461, Oct. 2009.

\bibitem{capacity_closedform}
O.~Ozel, E.~Ekrem, and S.~Ulukus, ``Gaussian wiretap channel with an amplitude
  constraint,'' in \emph{IEEE Inform. Theory Workshop in Lausanne,
  Switzerland}, Sep. 2012.

\bibitem{09I.K}
I.~Krikidis, J.~S. Thompson, and S.~Mclaughlin, ``Relay selection for secure
  cooperative networks with jamming,'' \emph{IEEE Trans. on Wireless Commun.},
  vol.~8, no.~10, pp. 5003--5011, October 2009.

\bibitem{15H.H}
H.~Hui, A.~L. Swindlehurst, G.~Li, and J.~Liang, ``Secure relay and jammer
  selection for physical layer security,'' \emph{IEEE Signal Proc. Lett.},
  vol.~22, no.~8, pp. 1147--1151, Aug 2015.

\bibitem{16G.C}
G.~Chen, Y.~Gong, P.~Xiao, and J.~A. Chambers, ``Dual antenna selection in
  secure cognitive radio networks,'' \emph{IEEE Trans. on Vehicular
  Technology}, vol.~65, no.~10, pp. 7993--8002, Oct 2016.

\bibitem{04S.B}
S.~Boyd and L.~Vandenberghe, \emph{Convex Optimization}.\hskip 1em plus 0.5em
  minus 0.4em\relax New York, NY, USA: Cambridge University Press, 2004.

\bibitem{baddeley2015spatial}
A.~Baddeley, E.~Rubak, and R.~Turner, \emph{Spatial point patterns: methodology
  and applications with \textsc{R}}.\hskip 1em plus 0.5em minus 0.4em\relax CRC
  Press, 2015.

\end{thebibliography}

\end{document}